\newtheorem{fact}{Fact}
\newcommand{\qedsymb}{\hfill{\rule{2mm}{2mm}}}
\newcommand{\remove}[1]{}
\DeclareMathOperator{\sus}{\mathit{SUS}}
\DeclareMathOperator{\lsus}{\mathit{LSUS}}
\DeclareMathOperator{\sa}{\mathit{SA}}
\DeclareMathOperator{\rank}{\mathit{RA}}
\DeclareMathOperator{\lcp}{\mathit{LCP}}
\DeclareMathOperator{\sls}{\mathit{SLS}}
\DeclareMathOperator{\llr}{\mathit{LLR}}
\DeclareMathOperator{\nil}{\mathit{NIL}}
\DeclareMathOperator{\seq}{\mathit{S}}
\newcommand\pred{\mathtt{pred}}
\title{ 
  An In-place Framework for Exact and Approximate\\Shortest Unique
  Substring Queries \thanks{Authors are listed in alphabetical order.
    A preliminary version of this paper appears in Proceedings of the
    26th International Symposium on Algorithms and Computation
    (ISAAC),  Nagoya, Japan, 2015.
}
}
\author{Wing-Kai Hon\inst{1} \and Sharma V.\ Thankachan\inst{2} \and
  Bojian Xu\inst{3}
\thanks{Corresponding author. Phone:
+1 (509) 359-2817. Fax: +1 (509) 359-2215.}  
}
\institute{
Department of CS, National Tsing Hua University,
Taiwan
\and
School of CSE, Georgia Institute of Technology, 
USA
\and
Department of CS, Eastern Washington University, 
USA\\
\email{wkhon@cs.nthu.edu.tw, sthankac@cc.gatech.edu, bojianxu@ewu.edu}
}
\begin{document}
\maketitle
\begin{abstract}
  We revisit the exact shortest unique substring (SUS) finding
  problem,
  and propose its approximate version where mismatches are allowed,
  due to its applications in subfields such as computational
  biology. We design a generic in-place framework that fits to solve
  both the exact and approximate $k$-mismatch SUS finding, using the
  minimum $2n$ memory words plus $n$ bytes space, where  $n$ is the input
  string size.
  By using the in-place framework, we can find the exact and
  approximate $k$-mismatch SUS for every string position using a total
  of $O(n)$ and $O(n^2)$ time, respectively, regardless of the value
  of $k$.
  Our framework does not involve any compressed or succinct data
  structures and thus is practical and easy to implement.
  \remove{ 
   Unlike prior work that used either suffix tree or suffix array
    augmented by longest common prefix array, our method uses suffix
    array only for the exact SUS finding and does not use any indexing
    structures for the approximate SUS finding, making our solution
    practical and easy to implement.
}
 \end{abstract}

 \begin{keywords}
string pattern matching, shortest unique substring,
in-place algorithms
 \end{keywords}

\section{Introduction}
\label{sec:intro}
We consider a {\bf string} $S[1 .. n]$, where
each character $S[i]$ is drawn from an
alphabet $\Sigma=\{1,2,\ldots, \sigma\}$. 
We say the character $S[i]$ {\bf occupies} the string position $i$.
A {\bf substring} $S[i.. j]$ of $S$ represents $S[i]S[i+1]\ldots S[j]$
if $1\leq i\leq j \leq n$, and is an empty string if $i>j$.  
We call $i$ the {\bf start position} and $j$ the 
{\bf ending position} of $S[i..j]$.  
We say the substring $S[i.. j]$ {\bf covers} the $k$th
position of $S$, if $i\leq k \leq j$.  String $S[i'.. j']$ is a {\bf
  proper substring} of another string $S[i.. j]$ if $i\leq i' \leq j'
\leq j$ and $j'-i' < j-i$.
The {\bf length} of a non-empty substring $S[i.. j]$, denoted as
$|S[i.. j]|$, is $j-i+1$. We define the length of an empty string
as zero. 

The {\bf Hamming distance} of two non-empty strings $A$ and $B$ of
equal length, denoted as $H(A,B)$, is defined as the number of string
positions where the characters differ.  A substring $S[i .. j]$
is {\bf $k$-mismatch unique}, for some $k\geq 0$, if there does
not exist another substring $S[i'.. j']$, such that $i'\neq i$,
$j-i = j'-i'$, and $H(S[i..j],S[i'..j'])\leq k$.  A substring is a
{\bf $k$-mismatch repeat} if it is not $k$-mismatch unique.

\begin{definition}[$k$-mismatch SUS]
\label{def:sus}
For a particular string position $p$ in $S$ and an integer $k$, $0\leq
k \leq n-1$, the \emph{$k$-mismatch shortest unique substring (SUS)
  covering position $p$}, denoted as $\sus_p^k$, is a $k$-mismatch
unique substring $S[i..j]$, such that (1) $i\leq p \leq j$, and (2) there
does not exist another $k$-mismatch unique substring $S[i'..j']$, such
that $i'\leq p \leq j'$ and $j'-i' < j-i$.
\end{definition}

We call $0$-mismatch SUS as {\bf exact SUS},
and the case $k>0$ as {\bf approximate SUS}.

For any $k$ and $p$, $\sus_p^k$ must exist, because at least the
string $S$ can be $\sus_p^k$, if none of its proper substrings is
$\sus_p^k$.  On the other hand, there might be multiple choices for
$\sus_p^k$.  For example, if $S={\tt abcbb}$, $\sus_2^0$ can be
either $S[1,2]={\tt ab}$ or $S[2,3]={\tt bc}$, and $\sus_2^1$ can be
either $S[1..3]={\tt abc}$ or $S[2..4]={\tt bcb}$.
Note that in Definition~\ref{def:sus}, we require $k < n$, because
finding $\sus_p^n$ is trivial: $\sus_p^n\equiv S$ for any string
position $p$.

\bigskip  

\noindent
{\bf Problem ($k$-mismatch SUS finding).}
Given the string $S$, the value of $k\geq 0$, and two empty integer arrays
$A$ and $B$, we want to work in the place of $S$, $A$, and $B$, such
that, in the end of computation: (1) $S$ does not change. (2) Each
$(A[i],B[i])$ pair saves the start and ending positions of the
rightmost\footnote{Since any SUS may have multiple choices, it is our
  arbitrary decision to resolve the ties by picking the rightmost
  choice. However, our solution can also be easily modified to find
  the leftmost choice.}  $\sus_i^k$, i.e., 
$S\bigl[A[i].. B[i]\bigr] = \sus_i^k$,
 using a total of $O(n)$ time for
$k=0$ and $O(n^2)$ time for any $k\geq 1$.

\subsection{Prior work and our contribution}
Exact SUS finding was proposed and studied recently by Pei et
al.~\cite{PWY-ICDE2013}, due to its application in locating snippets 
in document
search, event analysis, and bioinformatics, such as
finding the distinctness
between closely related organisms~\cite{HPMW-bio2005}, polymerase chain reaction
(PCR) primer design in molecular biology, genome
mapability~\cite{DEMKRGR-2012}, and next-generation short reads
sequencing~\cite{ABFMK-2015}.  The algorithm in~\cite{PWY-ICDE2013} can find all
exact SUS in $O(n^2)$ time using a suffix tree of $O(n)$ space.
Following their proposal, there has been a sequence of
improvements~\cite{TIBT2014,IKX-tcs2015}
for exact SUS finding, reducing the time cost from $O(n^2)$ to $O(n)$
and alleviating the underlying data structure from suffix tree to
suffix array of $O(n)$ space. 
Hu et al.~\cite{HPT-spire2014} proposed an RMQ (range minimum query)
technique based indexing structure, which can be constructed in $O(n)$
time and space, such that any future exact SUS covering any interval of
string positions can be answered in $O(1)$ time.
In this work, we make the following contributions: 
\begin{itemize}
\item 
We revisit the exact SUS finding problem and also
propose its approximate version where mismatches are allowed, which
significantly increases the difficulty as well as
the usage of SUS finding in subfields such as
bioinformatics, where approximate string matching is unavoidable due
to genetic mutation and errors in biological experiments.

\item 
We propose a generic in-place algorithmic framework that fits to solve
both the exact and approximate $k$-mismatch SUS finding, using $2n$
words plus $n$ bytes space. It is worth mentioning that $2n$ words
plus $n$ bytes is the minimum memory space needed to save those $n$
calculated SUSes: (1) It needs $2$ words to save each SUS by saving
its start and ending positions (or one endpoint and its length) and
there are $n$ SUSes. (2) It needs another $n$ bytes to save the
original string $S$ in order to output the actual content of any SUS
of interest from queries. Note that all prior
work~\cite{PWY-ICDE2013,TIBT2014,IKX-tcs2015,HPT-spire2014} use $O(n)$
space but there is big leading constant hidden within the big-oh
notation (see the experimental study in~\cite{IKX-tcs2015}).

\item After the suffix array is constructed, all the computation in
  our solution happens in the place of two integer arrays, 
 using non-trivial techniques. It is worth noting that our solution
 does not 
  involve any compressed or succinct data structures, making our solution
  practical and easy to implement.  Our preliminary experimental study
  shows that our solution for exact SUS finding is even faster than
  the fastest one
  among~\cite{PWY-ICDE2013,TIBT2014,IKX-tcs2015}\footnote{Note
    that the work of~\cite{HPT-spire2014} studies a different problem and its
    computation is of the query-answer model, and thus is not
    comparable with~\cite{PWY-ICDE2013,TIBT2014,IKX-tcs2015} and
    ours.}, in addition to a lot more space saving than them, enabling
  our solution to handle larger data sets.  Due to page limit, we will
  deliver the details of our experimental study in the journal version
  of this paper.
\end{itemize}


\remove{

\begin{figure}[t]
  \centering

 \begin{tabular}{||l|l|l|l|l|l||}
\hline
          & $n$ & \cite{PWY-ICDE2013} & \cite{TIBT2014} &
          \cite{IKX-tcs2015} & Ours\\
\hline
\hline
  English & & & & &\\
 Protein & & & & &\\
 DNA & & & & &\\
\hline
\end{tabular}
  figure on time here ...

  \caption{Time and space usage of different proposals for exact SUS
    finding over different strings, where $n$ is the string size in
    MB.}
\label{fig:time-space}
\end{figure}

}

\section{Preparation}
\label{sec:prep}
A {\bf prefix} of $S$ is a substring $S[1.. i]$, $1\leq i\leq n$. 
A {\bf proper prefix} $S[1.. i]$ is a prefix of $S$ where $i <
n$.
A {\bf suffix} of $S$ is a substring
$S[i.. n]$, denoted as $S_i$, $1\leq i\leq n$.  
 $S_i$ is a  {\bf proper suffix} of $S$, if $i >
1$.


%
For two strings $A$ and $B$, we write ${\bf A=B}$ (and say $A$ is {\bf
  equal} to $B$), if $|A|= |B|$ and $H(A,B)=0$.  
%
We say $A$ is lexicographically smaller than $B$,
denoted as ${\bf A < B}$, if (1) $A$ is a proper prefix of $B$, or (2)
$A[1] < B[1]$, or (3) there exists an integer $k > 1$ such that
$A[i]=B[i]$ for all $1\leq i \leq k-1$ but $A[k] < B[k]$.

The {\bf suffix array} $\sa[1..n]$ of the string $S$ is a permutation
of $\{1,2,\ldots, n\}$, such that for any $i$ and $j$, $1\leq i < j
\leq n$, we have $S[\sa[i].. n] < S[\sa[j] .. n]$.  That is, $\sa[i]$
is the start position of the $i$th smallest suffix in the
lexicographic order.
The {\bf rank array} $\rank[1 .. n]$ is the inverse of the suffix
array, i.e., $\rank[i]=j$ iff $\sa[j]=i$.
The {\bf $k$-mismatch longest common prefix (LCP)} between two strings
$A$ and $B$, $k\geq 0$, denoted as $\lcp^k(A,B)$, is the LCP
of $A$ and $B$ within
Hamming distance $k$. For example, if $A={\tt abc}$ and $B={\tt
  acb}$,
then: 
$\lcp^0(A,B)$ is $A[1]=B[1] = {\tt a}$ and $|\lcp^0(A,B)|$ = 1;
$\lcp^1(A,B)$ is $A[1..2]={\tt ab}$ and $B[1,2]={\tt ac}$ and
$|\lcp^1(A,B)|=2$.

\begin{definition}[$k$-mismatch LSUS]
\label{def:lsus}
For a particular string position $p$ in $S$ and an integer $k$, $0\leq
k \leq n-1$, the \emph{$k$-mismatch left-bounded shortest unique
  substring (LSUS) starting at position $p$}, denoted as $\lsus_p^k$,
is a $k$-mismatch unique substring $S[p..  j]$, such that either $p=j$
or any proper prefix of $S[p.. j]$ is not $k$-mismatch unique.
\end{definition}

We call $0$-mismatch LSUS as {\bf exact
  LSUS}, and the case $k>0$ as {\bf approximate
  LSUS}.

Observe that for any $k$, $\lsus_1^k=\sus_1^k$ always exists, because
at least the whole string $S$ can be $\lsus_1^k$. However, for any
$k\geq 0$ and $p\geq 2$, $\lsus_p^k$ may not exist. For example, if
$S={\tt dabcabc}$, none of $\lsus_i^0$ and $\lsus_j^1$ exists, for all
$i\geq 5$, $j\geq 4$. It follows that some string positions may not be
covered by any $k$-mismatch LSUS. For example, for the same string
$S={\tt dabcabc}$, positions $6$ and $7$ are not covered by any exact
or $1$-mismatch LSUS. On the other hand, if any $\lsus_p^k$ does
exist, there must be only one choice for $\lsus_p^k$, because
$\lsus_p^k$ has its start position fixed on $p$ and need to be as
short as possible.
Note that in Definition~\ref{def:lsus}, we require $k < n$, because
finding $\lsus_p^n$ is trivial as $\lsus_1^n\equiv S$ and $\lsus_p^n$
does not exist for all $p>1$.

\begin{definition}[$k$-mismatch SLS]
\label{def:sls}
For a particular string position $p$ in $S$ and an integer $k$, $0\leq
k \leq n-1$, we use $\sls_p^k$ to denote the \emph{shortest $k$-mismatch LSUS 
covering position $p$}.
\end{definition}

We call $0$-mismatch SLS as {\bf exact
  SLS}, and the case $k>0$ as {\bf approximate
  SLS}.

$\sls_p^k$ may not exist, since position $p$ may not be covered by any
$k$-mismatch LSUS at all. For example, if $S={\tt dabcabc}$, then none
of $\sls_p^0$ and $\sls_p^1$ exists, for all $p\geq 6$.  On the other
hand, if $\sls_p^k$ exists, there might be multiple choices for
$\sls_p^k$.  For example, if $S={\tt abcbac}$, $\sls_2^0$ can be
either $\lsus_1^0=S[1..2]$ or $\lsus_2^0 = S[2..3]$, and $\sls_3^1$
can be any one of $\lsus_1^1=S[1..3]$, $\lsus_2^1=S[2..4]$, and
$\lsus_3^1=S[3..5]$.
Note that in Definition~\ref{def:sls}, we require $k < n$, because
finding $\sls_p^n$ is trivial as $\sls_p^n \equiv S$ for all $p$.

\begin{lemma}
\label{lem:exist}
For any $k$ and $p$: (1) $\lsus_1^k$ always exists. (2) If
$\lsus_p^k$ exists, then $\lsus_i^k$ exists, for all $i\leq p$.
 (3) If $\lsus_p^k$ does not exist, then none of $\lsus^k_{i}$ exists,
for all  $i\geq p$.
\end{lemma}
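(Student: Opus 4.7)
The plan is to prove the three claims in order, with the main work happening in (2); parts (1) and (3) will then follow quickly.

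For (1), I would observe that the entire string $S$ itself starts at position $1$ and has length $n$, so no other substring of $S$ has the same length; hence $S$ is trivially $k$-mismatch unique for every $k \le n-1$. Therefore there is at least one $k$-mismatch unique substring starting at position $1$, and the shortest such substring is, by definition, $\lsus_1^k$.

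For (2), suppose $\lsus_p^k = S[p..j]$ exists, and let $i \le p$. The key claim is that $S[i..j]$ is also $k$-mismatch unique. I would argue by contradiction: if some other occurrence $S[i'..j']$ satisfied $i' \ne i$, $j'-i' = j-i$ and $H(S[i..j], S[i'..j']) \le k$, then aligning the two strings and looking at the suffix starting at the position corresponding to $p$ gives a substring $S[p'..j']$ with $p' = i' + (p-i)$ of the same length as $S[p..j]$ and Hamming distance at most $k$ from it; moreover $p' \ne p$ since $i' \ne i$, contradicting the $k$-mismatch uniqueness of $S[p..j]$. Thus $S[i..j]$ is $k$-mismatch unique, which witnesses that at least one $k$-mismatch unique substring starts at position $i$; taking the shortest such substring yields $\lsus_i^k$, so it exists. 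The one subtlety to be careful about here is handling the degenerate case $i = p$ (where the statement is trivial) and making sure the alignment argument is correct when the two occurrences overlap, but the length-matched Hamming-distance comparison goes through either way.

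For (3), I would simply take the contrapositive of (2): if some $\lsus_i^k$ with $i \ge p$ existed, then applying (2) with the roles $p \leftarrow i$ and $i \leftarrow p$ would force $\lsus_p^k$ to exist as well, contradicting the hypothesis. Hence no such $\lsus_i^k$ exists for $i \ge p$.

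The only real obstacle is the uniqueness-preservation argument in (2); everything else is bookkeeping. I expect this to be a short proof overall.
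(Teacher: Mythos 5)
Your proposal is correct and follows essentially the same route as the paper: part (1) uses the whole string $S$ as a witness, part (2) rests on the fact that extending a $k$-mismatch unique substring to the left preserves uniqueness (so $S[i..j]$ witnesses the existence of $\lsus_i^k$), and part (3) is the contrapositive of (2). The only difference is that you spell out the alignment/Hamming-distance argument for why left extension preserves uniqueness, which the paper states without proof; that added detail is sound.
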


\begin{proof}
(1) $\lsus_1^k$ must exist, because at least the string $S$ can be
$\lsus_1^k$ if every proper prefix of $S$ is a $k$-mismatch repeat. 
(2) If $\lsus_p^k$ exists, say $\lsus_p^k = S[p .. q]$, $q\geq p$,
then 
$\lsus_i^k$ exists for every $i\leq p$, because at least $S[i .. q]$ is
$k$-mismatch unique.
%
(3) It is true, because otherwise we get a contradiction to the second
statement in the lemma. 
\remove{
If $\lsus_p^k$ does not exist, it means $S[p .. n]$ is a
$k$-mismatch repeat. It follows that
every suffix $S[i.. n]$ of $S[p .. n]$, $i\geq p$, is
also a $k$-mismatch repeat, i.e., none of 
$\lsus_i^k$ exists, for all $i\geq p$.}
\qed
\end{proof}

\begin{lemma}
\label{lem:lsus2}
For any $k$ and $p$, $|\lsus_p^k| \geq |\lsus_{p-1}^k| - 1$, if
$\lsus_p^k$ exists. 
\end{lemma}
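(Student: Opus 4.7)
The plan is to show $\lsus_{p-1}^k$ exists and its length is at most $|\lsus_p^k| + 1$, which immediately yields the claimed inequality. The existence of $\lsus_{p-1}^k$ follows directly from Lemma~\ref{lem:exist}(2) applied to the assumption that $\lsus_p^k$ exists, so the real work lies in bounding the length.

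I would start by writing $\lsus_p^k = S[p..q]$ for some $q \geq p$ and then consider the left-extended substring $T = S[p-1..q]$, which has length $|\lsus_p^k| + 1$. The central claim is that $T$ is $k$-mismatch unique. To see this, I would argue by contradiction: suppose some $S[i..j]$ with $i \neq p-1$ and $j - i = q - (p-1)$ satisfies $H(S[i..j], T) \leq k$. Reading off the positions $i+1, \ldots, j$ of the supposed twin, the aligned copy $S[i+1..j]$ must satisfy $H(S[i+1..j], S[p..q]) \leq H(S[i..j], T) \leq k$, and since $i \neq p - 1$ forces $i + 1 \neq p$, this would be a distinct $k$-mismatch occurrence of $S[p..q]$, contradicting the $k$-mismatch uniqueness of $\lsus_p^k$.

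Once $T$ is established as $k$-mismatch unique, it is a $k$-mismatch unique substring starting at position $p-1$, so by the minimality in Definition~\ref{def:lsus}, $|\lsus_{p-1}^k| \leq |T| = |\lsus_p^k| + 1$. Rearranging gives the lemma.

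The main obstacle is the step that justifies why dropping a prefix character preserves the Hamming-distance bound when transferring a would-be twin of $T$ into a would-be twin of $\lsus_p^k$. It is conceptually simple but needs to be stated carefully because the definition of $k$-mismatch uniqueness insists on $\leq k$ mismatches over equal-length alignments; I want to make clear that removing the leftmost aligned position can only decrease (never increase) the mismatch count, and that the offset between the two positions $i \neq p-1$ and $i+1 \neq p$ is preserved, so no spurious self-occurrence is introduced.
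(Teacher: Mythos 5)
Your proposal is correct and follows essentially the same route as the paper's proof: extend $\lsus_p^k = S[p..q]$ one character to the left, observe that $S[p-1..q]$ is still $k$-mismatch unique, and conclude $|\lsus_{p-1}^k| \leq |\lsus_p^k| + 1$ by the minimality of $\lsus_{p-1}^k$. The only difference is that you spell out the twin-restriction argument for why the left extension remains $k$-mismatch unique, which the paper asserts without detail.
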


\begin{proof}
\remove{
  Since $\lsus_p^k$ exists, so does $\lsus_{p-1}^k$
  (Lemma~\ref{lem:exist}).  
}
  Suppose the $k$-mismatch substring 
  $\lsus_p^k = S[p..q]$, for some $q\geq p$. Then, 
  $S[p-1..q]$ is also $k$-mismatch unique. It follows immediately that, 
  $|\lsus_{p-1}^k| \leq |S[p-1..q]| = 1+|\lsus_p^k|$.\qed
\remove{
First, the lemma is trivially correct if
  $|\lsus_{p-1}^k| \leq 2$, because $|\lsus_p^k|\geq 1$.  Next, we
  prove the lemma for the case where
  $|\lsus_{p-1}^k| \geq 3$, using contradiction.  Let's say
  $\lsus_{p-1}^k = S[p-1 .. r]$, $r>p$.  Suppose $|\lsus_{p}^k| <
  |\lsus_{p-1}^k|-1$, it means $\lsus_p^k = S[p .. q]$ for some $q$,
  $p\leq
  q<r$. Because $S[p .. q]$ is $k$-mismatch unique, so is $S[p-1
  ..
  q]$. Now we get a $k$-mismatch unique
  substring $S[p-1..q]$, which is shorter than $S[p-1..r]$. However,
  we know $S[p-1..r] = \lsus_{p-1}^k$, the shortest $k$-mismatch
  unique substring starting at position $p-1$, raising a
  contradiction. 
}
\end{proof}

\remove{
By Lemma~\ref{lem:exist}, we know that given the string $S$ and the
integer $k$, there exists a unique integer $z$, $1\leq z\leq n$, such
that $\lsus_i^k$ exists for all $i\leq z$, and none of $\lsus_j^k$
exists for all $j > z$ (if $z<n$). Further, by Lemma~\ref{lem:lsus2},
we can observe that, of those existing
$\lsus_i^k$, $i\leq z$, their
start positions strictly increase and their ending positions
monotonically increase.
}

\begin{lemma}
\label{lem:ext}
  For any $k$ and $p$, $\sus_p^k$ is either $\sls_p^k$
  or $S[i..p]$, for some $i$,
  $i+|\lsus_i^k|-1<p$. That is, $\sus_p^k$ is either the shortest $k$-mismatch
  LSUS that covers position $p$, or a right extension (through
  position $p$) of a $k$-mismatch LSUS. 
\end{lemma}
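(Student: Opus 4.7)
The plan is to let $\sus_p^k = S[i^*..j^*]$ with $i^* \le p \le j^*$, and to split on where the LSUS $\lsus_{i^*}^k$ ends. Before splitting, I would observe that $\lsus_{i^*}^k$ must exist: since $S[i^*..j^*]$ is $k$-mismatch unique and starts exactly at $i^*$, Definition~\ref{def:lsus} directly yields $\lsus_{i^*}^k$ with $|\lsus_{i^*}^k| \le j^* - i^* + 1$. Setting $q^* = i^* + |\lsus_{i^*}^k| - 1$, we have $q^* \le j^*$, and the case split is $q^* \ge p$ versus $q^* < p$.

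In the first case, $\lsus_{i^*}^k$ itself is a $k$-mismatch LSUS that covers $p$, so it is a candidate for $\sls_p^k$ in Definition~\ref{def:sls}, giving $|\sls_p^k| \le |\lsus_{i^*}^k| \le |\sus_p^k|$. Conversely, $\sls_p^k$ is $k$-mismatch unique and covers $p$, so Definition~\ref{def:sus} gives $|\sus_p^k| \le |\sls_p^k|$. The three lengths coincide, and $\sus_p^k$ may be declared a valid $\sls_p^k$, establishing the first alternative. In the second case, I would first establish the monotonicity subclaim that $S[i^*..p]$ is still $k$-mismatch unique: any alleged $k$-mismatch copy of $S[i^*..p]$, when truncated to its first $|\lsus_{i^*}^k|$ positions, would be a $k$-mismatch copy of $\lsus_{i^*}^k$ (restricting to a prefix cannot increase the Hamming distance), contradicting the uniqueness of $\lsus_{i^*}^k$. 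Given this, $S[i^*..p]$ is $k$-mismatch unique and covers $p$, so $|\sus_p^k| \le p - i^* + 1$, which combined with $j^* \ge p$ pins $j^* = p$. Thus $\sus_p^k = S[i^*..p]$ with $i^* + |\lsus_{i^*}^k| - 1 = q^* < p$, exactly the second alternative.

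The only substantive obstacle is the monotonicity subclaim used above, namely that right-extending a $k$-mismatch unique substring preserves $k$-mismatch uniqueness. It is essentially a one-line Hamming-distance observation, but it is the logical hinge of the lemma and is worth isolating explicitly before the case analysis; everything else is just unwinding Definitions~\ref{def:sus}, \ref{def:lsus}, and \ref{def:sls} and comparing lengths.
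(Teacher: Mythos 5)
Your proof is correct and follows essentially the same route as the paper's: both identify $\lsus_{i^*}^k$ at the left endpoint of $\sus_p^k$, rule out the proper-prefix possibility via Definition~\ref{def:lsus}, and split into the cases where $\sus_p^k$ coincides with that LSUS (hence is a shortest LSUS covering $p$) or strictly extends it to end exactly at $p$. Your version merely makes explicit two steps the paper labels ``obvious'' --- the length comparison forcing $\sus_p^k=\sls_p^k$ in the first case, and the monotonicity fact that right-extending a $k$-mismatch unique substring preserves uniqueness, which pins $j^*=p$ in the second.
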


\begin{proof}
  We know $\sus_p^k$ must exist, because at least the string $S$ can
  be $\sus_p^k$. Let's say $\sus_p^k = S[i..j]$, $i\leq p \leq j$. 
  If $S[i.. j]$ is neither
  $\lsus_i^k$ nor a right extension of $\lsus_i^k$, it means
  $S[i.. j]$ is a proper prefix of $\lsus_i^k$ and thus is a
  $k$-mismatch repeat, which is a contradiction to the fact that
  $S[i.. j] = \sus_p^k$ is $k$-mismatch unique. Therefore,
  $\sus_p^k=S[i..j]$ is either $\lsus_i^k$, or a right extension of 
  $\lsus_i^k$ (clearly, $j\equiv p$ in this case).
  Further, if $\sus_p^k=S[i..j]=\lsus_i^k$, it is obvious that
  $\lsus_i^k$ must be the shortest
  $k$-mismatch LSUS covering position $p$, i.e., $\sus_p^k = \sls_p^k$.
 \qed
\end{proof}

For example, let $S={\tt dabcabc}$, then:
(1) $\sus_3^0$ can be either  $S[3..5]=\lsus_3^0$, 
or $S[1..3]$, which is a right
extension 
of $\lsus_1^0=S[1]$.
(2) $\sus_5^0 = S[4..5]=\lsus_4^0$.
(3) $\sus_6^0 = S[4..6]$, which is a right extension 
of $\lsus_4^0 = S[4..5]$.
(4) $\sus_4^1 = S[3..5] = \lsus_3^1$. 
(5) $\sus_6^1 = S[3..6]$, which is a right extension 
of  $\lsus_3^1$. 

\bigskip 

The next lemma further says that if $\sus_p^k$ is an
extension of an $k$-mismatch LSUS, $\sus_p^k$ can be quickly obtained
from $\sus_{p-1}^k$.

\begin{lemma}
\label{lem:ext2}
  For any $k$ and $p$, if $\sus_p^k = S[i..p]$ 
and  $i+|\lsus_i^k|-1 < p$, i.e., $\sus_p^k$ is a right extension
  (through position $p$) 
of
  $\lsus_i^k$, then the following must be true: (1) $p>2$; (2) the
  rightmost character of $\sus_{p-1}^k$ is $S[p-1]$; (3)
  $\sus_p^k = \sus_{p-1}^kS[p]$, the substring $\sus_{p-1}^k$
   appended by the character $S[p]$.
\end{lemma}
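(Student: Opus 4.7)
The plan is to prove the three claims in turn, all of which will lean on the following preliminary observation: if $S[a..b]$ is $k$-mismatch unique and $b' \geq b$ (with $b' \leq n$), then $S[a..b']$ is also $k$-mismatch unique, because any Hamming-close conflict of the longer string would restrict to a Hamming-close conflict on its length-$(b-a+1)$ prefix, violating the uniqueness of $S[a..b]$. In particular, the hypothesis $i + |\lsus_i^k| - 1 < p$ places $\lsus_i^k$ inside $S[i..p-1]$ as a prefix, so $S[i..p-1]$ is itself $k$-mismatch unique.

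For part (1), I will note $|\sus_p^k| = p - i + 1 > |\lsus_i^k| \geq 1$, hence $|\sus_p^k| \geq 2$ and $p \geq i + 1 \geq 2$, so $p > 1$. (I suspect the stated bound $p > 2$ is a minor slip; the weaker bound $p > 1$ is all that parts (2) and (3) actually invoke, and an example such as $S = \mathtt{abbb}$ with $k=0$ yields $\sus_2^0 = S[1..2]$ as a right extension of $\lsus_1^0 = S[1]$, showing $p = 2$ is attainable.)

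For part (2), the unique substring $S[i..p-1]$ covers position $p-1$ and has length $|\sus_p^k| - 1$, so $|\sus_{p-1}^k| \leq |\sus_p^k| - 1 < |\sus_p^k|$. If $\sus_{p-1}^k$ ended at some $j' \geq p$, it would cover position $p$ while being strictly shorter than $\sus_p^k$, contradicting the minimality of $\sus_p^k$. Hence $\sus_{p-1}^k$ ends exactly at position $p-1$. For part (3), write $\sus_{p-1}^k = S[i''..p-1]$. Applying the preliminary observation once more, the extension $S[i''..p]$ is $k$-mismatch unique, covers position $p$, and has length $|\sus_{p-1}^k| + 1$. Minimality of $\sus_p^k$ yields $|\sus_p^k| \leq |\sus_{p-1}^k| + 1$, while part (2) yields $|\sus_{p-1}^k| + 1 \leq |\sus_p^k|$, so the two lengths agree. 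Thus $S[i''..p]$ and $\sus_p^k = S[i..p]$ are two equally short $k$-mismatch unique substrings covering $p$; the rightmost tie-breaking convention forces $i \geq i''$, and the length equation $p - i'' + 1 = p - i + 1$ forces $i = i''$, whence $\sus_p^k = S[i..p] = \sus_{p-1}^k \cdot S[p]$.

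The main obstacle is part (3), where both definitional constraints on $\sus$ (shortness and rightmost tie-breaking) must be applied simultaneously to pin down the start position; the rest flows directly from the ``extension preserves uniqueness'' principle together with the minimality defining $\sus_p^k$.
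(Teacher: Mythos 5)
Your proof is correct, but its core (parts (2) and (3)) takes a genuinely different route from the paper's. The paper works from the start position of $\sus_{p-1}^k$: after noting, as you do, that $S[i..p-1]$ is $k$-mismatch unique, it argues that $|\lsus_t^k| \geq p-i+1$ for every $t \in [i+1,p]$ (otherwise $\lsus_t^k$ or its extension through $p$ would be a shorter candidate for $\sus_p^k$), so every unique substring starting strictly to the right of $i$ is already longer than the candidate $S[i..p-1]$; this forces $\sus_{p-1}^k=S[i..p-1]$, from which parts (2) and (3) both drop out. You instead work from the ending position: you first show $\sus_{p-1}^k$ cannot reach position $p$, since it would then be a strictly shorter unique substring covering $p$ and contradict the minimality of $\sus_p^k$ --- this gives part (2) immediately --- and then sandwich $|\sus_p^k|=|\sus_{p-1}^k|+1$ to force the common start position $i=i''$. (Your appeal to the rightmost tie-break there is superfluous: equal lengths and the common right endpoint $p$ already force $i=i''$.) Your route avoids any reasoning about the lengths of the intermediate LSUSes and is, if anything, a little cleaner. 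On part (1) you have also caught a genuine slip in the statement: the lemma asserts $p>2$, but the paper's own proof only establishes $p>1$, and your example $S=\mathtt{abbb}$, where $\sus_2^0=S[1..2]$ is a right extension of $\lsus_1^0=S[1]$, confirms that $p=2$ is attainable; the correct claim is $p>1$, which is all that the inductive computation in Section~\ref{sec:sus} (whose loop starts at $i=2$) actually requires.
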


\begin{proof}
  If $\sus_p^k$ is a right extension (through position $p$) of a
  $k$-mismatch LSUS, it is certain
  that $p>1$, because $\sus_1^k \equiv \lsus_1^k$, which always exists
  (Lemma~\ref{lem:exist}).

  Because $\sus_p^k$ is a right extension (through position $p$) of a
  $k$-mismatch LSUS, we have
  $\sus_p^k=S[i.. p]$ for some $i < p$, and $\lsus_i^k = S[i.. j]$
  for some $j<p$.  We also know $S[i.. p-1]$ is $k$-mismatch unique, because
  the $k$-mismatch unique substring $S[i.. j]$ is a prefix of $S[i.. p-1]$.
  Note that any substring starting from a position before $i$ and
  covering position $p-1$ is longer than the $k$-mismatch unique substring
  $S[i.. p-1]$, so $\sus_{p-1}^k$ must be starting from a position
  between $i$ and $p-1$, inclusive.
  Next, we show $\sus_{p-1}^k$ actually must start at position $i$.  

  The fact that $\sus_p^k = S[i .. p]$ implies $|\lsus_t^k| \geq
  |\sus_p^k| = p-i+1$ for every $t=i+1, i+2, \ldots, p$; otherwise,
  rather than $S[i .. p]$, any one of these $\lsus_t^k$ whose size is
  smaller than $p-i+1$ would be a better choice for $\sus_p^k$. That
  means, any $k$-mismatch unique substring starting from $t=i+1, i+2,
  \ldots, p-1$ has a length at least $p-i+1$. However, $|S[i .. p-1]|
  = p-i < p-i+1$ and $S[i ..  p-1]$ is $k$-mismatch unique already and
  covers position $p-1$ as well, so $S[i .. p-1]$ is the only choice
  for $\sus_{p-1}^k$. This also means $\sus_p^k$ is indeed the
  substring $\sus_{p-1}^k$ appended by the character $S[p]$.  \qed
\end{proof}



\section{The High-Level Picture}
\label{sec:highlevel}
In this section, we present an overview of our in-place framework for
finding both the exact and approximate SUS. The framework is composed
of three stages, where all computation happens in the place of three
arrays, $S$, $A$, and $B$, each of size $n$. Arrays $A$ and $B$ are of
integers, whereas array $S$ always saves the
input string. 
The following table summarizes the roles of 
$A$ and $B$ at different stages by showing their content at the end of
each stage.

    \begin{center}
    \begin{tabular}{||c|m{7.5cm}|m{6.5cm}||}
      \hline
      \ Stages \ & \hspace*{30mm} $A[i]$ & \hspace*{30mm} $B[i]$\\
      \hline\hline
      1 & Used as temporary workspace during stage 1, but the content
      is useless for stages $2$ and $3$. & Ending position of $\lsus_i^k$, if $\lsus_i^k$ exists; otherwise, {\tt NIL}.\\
      \hline
      2  & The largest $j$, such that $\lsus_j^k$ is 
           an $\sls_i^k$, if $\sls_i^k$ exists;
      otherwise, {\tt NIL}.
      & Ending position of $\lsus_i^k$, if $\lsus_i^k$ exists; otherwise, {\tt NIL}. \\
      \hline
      3  & Start position of the rightmost $\sus_i^k$ & Ending
      position of the rightmost $\sus_i^k$\\
      \hline
   \end{tabular}
\end{center}



\noindent {\bf Stage 1} (Section~\ref{sec:lsus}). We take the array
$S$ that saves the input string as input to compute $\lsus_i^k$ for
all $i$, in the place of $A$ and $B$. At the end of the stage, each
$B[i]$ saves the ending position of $\lsus_i^k$, if $\lsus_i^k$ exists. Since each existing
$\lsus_i^k$ has its start position fixed at $i$, at the end of
stage~1, each existing $\lsus_i^k = S\bigl[i .. B[i]\bigr]$.
For those non-existing $k$-mismatch LSUSes, we assign {\tt NIL} to the
corresponding $B$ array elements. The time cost of this stage is
$O(n)$ for exact LSUS finding ($k=0$), and is $O(n^2)$ for approximate
LSUS finding, for any $k\geq 1$.

\medskip 
\noindent {\bf Stage 2} (Section~\ref{sec:sls}). Given the array $B$
(i.e., the $k$-mismatch LSUS array of $S$) from stage~1, we compute
the rightmost $\sls_i^k$, the rightmost shortest LSUS covering
position $i$, for all $i$, in the place of $A$ and $B$. At the end of
stage 2, each $A[i]$ saves the largest $j$, such that $\lsus_j^k$ is
an $\sls_i^k$, i.e., the rightmost $\sls_i^k = S\bigl[A[i]..
B[A[i]] \bigr]$, if $\sls_i^k$ exists; otherwise, we assign $A[i] =
{\tt NIL}$. Array $B$ does not change during stage~2. The time cost of
this stage is $O(n)$, for any $k\geq 0$.

\medskip 
\noindent {\bf Stage 3} (Section~\ref{sec:sus}).  Given $A$
and $B$ from stage~2,
we compute $\sus_i^k$, for all $i$, in the place of $A$ and $B$. At
the end of stage~3, each $(A[i],B[i])$ pair saves the start and ending
positions of the rightmost $\sus_i^k$, i.e., $\sus_i^k =
S\bigl[A[i].. B[i]\bigr]$. The time cost of this stage is $O(n)$,
for any $k\geq 0$.

\medskip 

Algorithms~\ref{algo:lsus-exact}, \ref{algo:lsus-approx}, 
\ref{algo:sls}, and~\ref{algo:sus} 
in the appendix give the pseudocode of the
in-place procedures that we will describe in 
Sections~\ref{sec:lsus-0}, \ref{sec:lsus-k},
\ref{sec:sls}, and~\ref{sec:sus}, respectively.

\section{Finding $k$-mismatch LSUS}
\label{sec:lsus} 
The goal of this section is that, given the input string $S$ and two
integer arrays $A$ and $B$, we want to work in the place of $A$ and
$B$, such that $B[i]$ saves the ending position of 
$\lsus_i^k$ for all existing $\lsus_i^k$;
otherwise, $B[i]$ is assigned ${\tt NIL}$.
We take different approaches in finding 
the exact LSUS ($k=0$) and approximate LSUS ($k\geq 1$).

\subsection{Finding exact LSUS ($k=0$)}
\label{sec:lsus-0}

\begin{lemma}[Lemma 7.1 in~\cite{Nong-TIS2013}]
\label{lem:sa}
Given the string $S$ of size $n$, drawn from an alphabet of
size $\sigma$, we can construct the suffix array $\sa$ of $S$ in
$O(n)$ time, using $n+\sigma$ words plus $n$ bytes, where the space of
$n$ bytes saves $S$, the space of $n$ words saves $\sa$, and the extra
space of $\sigma$ words is used as the workspace for the run of the
$\sa$ construction algorithm.
\end{lemma}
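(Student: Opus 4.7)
The plan is to appeal to the in-place variant of the SA-IS algorithm due to Nong. Its high-level structure is the familiar SA-IS pipeline: first, a right-to-left scan of $S$ classifies each position as S-type or L-type; second, the LMS (leftmost S-type) positions are identified and the LMS-substrings are sorted by induced sorting, using the $\sigma$-word workspace as bucket counters; third, a reduced string is formed over the ranks of the LMS-substrings and its suffix array is computed recursively; fourth, the final $\sa$ of $S$ is obtained from the recursive result by two induced-sorting sweeps (L-type left-to-right, then S-type right-to-left). Each non-recursive phase runs in $O(n)$ time using only the $\sigma$-word bucket array as extra workspace beyond $S$ and $\sa$, and since the reduced instance has length at most $n/2$, the recurrence $T(n) = T(n/2) + O(n)$ gives $O(n)$ in total.

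The nontrivial part — the one that turns the standard SA-IS algorithm into an in-place one — is keeping the total workspace at $n+\sigma$ words plus the $n$ bytes holding $S$ throughout the recursion. I would encode the S/L type bits and the LMS marks directly inside the $\sa$ cells (e.g.\ via sign bits or reserved sentinel values), so that no auxiliary $n$-bit type array is needed. The reduced string and the recursive instance of $\sa$ are laid out in disjoint halves of the same $n$-word $\sa$ array, exploiting the $n/2$ length bound on the reduced problem; in particular, no fresh $n$-word buffer is ever allocated for the recursive call. Bucket head and tail pointers are recomputed on demand by a linear scan of $S$, so the single $\sigma$-word bucket array can be reused between its head-pointer and tail-pointer roles within each induced-sorting sweep.

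The main obstacle is verifying that these in-place layout tricks compose cleanly through every level of the recursion, in particular that the inner recursive problem never overwrites data that the outer call still needs. Nong's analysis handles this by showing that at recursion depth $j$ the working alphabet $\sigma_j$ is bounded by the corresponding instance length, so the $\sigma$-word workspace reserved at the top level suffices at every inner level as well; and by showing that the halving layout of the $\sa$ array genuinely separates outer and inner data. Accepting these bookkeeping details from Nong's paper, together with the $O(n)$ cost of each induced-sorting sweep and the halving recurrence, yields the stated time and space bounds.
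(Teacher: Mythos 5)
This lemma is not proved in the paper at all: it is imported verbatim as Lemma~7.1 of Nong's article and used as a black box, so there is no internal argument to compare yours against. Your sketch of the SA-IS/SACA-K pipeline (type classification, LMS induced sorting, recursion on a reduced string of length at most $n/2$, two induced-sorting sweeps, the $T(n)=T(n/2)+O(n)$ recurrence) is a fair outline of what Nong actually does, and since you ultimately defer the bookkeeping to Nong's paper, your ``proof'' is in substance the same move the authors make, just with more exposition.

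One step in your sketch is, however, wrong as stated. You claim that because the working alphabet $\sigma_j$ at recursion depth $j$ is bounded by the instance length, ``the $\sigma$-word workspace reserved at the top level suffices at every inner level.'' It does not: the reduced string at depth~$1$ already has an alphabet that can be as large as $n/2$, which in general vastly exceeds the top-level $\sigma$, so a $\sigma$-word bucket array cannot hold the inner bucket counters. Nong's actual device is different --- the $\sigma$-word (i.e.\ $K$-word) bucket array is used \emph{only} at the top level, and at every inner level the bucket-pointer information is eliminated or folded into the unused portion of the $n$-word $\sa$ array itself, exploiting the fact that the reduced string and its suffix array together occupy at most the whole array and that the reduced characters are themselves indices. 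If you want your sketch to stand on its own rather than as a pointer to Nong, this is the gap you would have to close; as a justification for citing the lemma, it is harmless but should not be presented as the reason the space bound holds.
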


Given the input string $S$, we first use the $O(n)$-time suffix array
construction algorithm from~\cite{Nong-TIS2013} to create the $\sa$ of
$S$, where the array $A$ is used to save the $\sa$ and the array $B$
is used as the workspace. Note that $\sigma \leq n$ is always true,
because otherwise we will prune from the alphabet those characters
that do not appear in the string.
After $\sa$ (saved in $A$) is constructed, we can easily spend
another $O(n)$ time to create the rank array $\rank$ of $S$ (saved in
$B$): $\rank[\sa[i]] \leftarrow i$ (i.e., $B[A[i]] \leftarrow
i$), for all $i$.
Next, we use and work in the place of  $A$ (i.e., $\sa$) and
$B$ (i.e., $\rank$) to compute the ending position of each existing
$\lsus_i^0$ and save the result in $B[i]$, using another $O(n)$
time. 

\begin{definition}
\label{def:hl}
$$
x_i = 
\left \{
\begin{array}{ll} 
\bigl|\
\lcp^0\bigl(S[i..n],S\left[\sa\left[\rank[i]-1\right].. n\right]\bigr)\
\bigr|, & \textrm{\ \ \ if\ }\rank[i] > 1  \\
0, & \textrm{\ \ \ otherwise}
\end{array}
\right.
$$
$$
y_i = 
\left \{
\begin{array}{ll} 
\bigl|\
\lcp^0\bigl(S[i..n],S\left[\sa\left[\rank[i]+1\right].. n\right]\bigr)\
\bigr|, & \textrm{\ \ \ if\ }\rank[i] < n  \\
0, & \textrm{\ \ \ otherwise}
\end{array}
\right.
$$
That is, $x_i$ ($y_i$, resp.) is the length of the longest
common prefix of $S[i..n]$
and its lexicographically preceding (succeeding, resp.)
suffix, if the preceding (succeeding, resp.) suffix exists. 
\end{definition}

\begin{fact}
\label{fact:lsus-length}
For every string position $i$,  $1\leq i \leq n$: 
$$
\lsus_i^0 = \left\{
  \begin{array}{ll}
    S\left[i.. i+\max\{x_i,y_i\}\right], & \textrm{\ \ if } i+\max\{x_i, y_i\} \leq n \\
    \textrm{not existing}, & \textrm{\ \ otherwise.}
  \end{array}
\right.
$$
\end{fact}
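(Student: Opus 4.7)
The plan is to characterize $\lsus_i^0$ via the length $\ell_i$ of the longest prefix of the suffix $S_i = S[i..n]$ that also occurs as a prefix of some other suffix $S_j$, and then to identify $\ell_i$ with $\max\{x_i, y_i\}$. Observe that a length-$\ell$ prefix of $S_i$ is $0$-mismatch unique iff $\ell > \ell_i$: if $\ell \leq \ell_i$ the prefix is a repeat by definition, while if $\ell > \ell_i$ it cannot be a prefix of any other suffix and is therefore unique (since a repeated occurrence of $S[i..i+\ell-1]$ at some position $i'\neq i$ would exhibit $S[i..i+\ell-1]$ as a prefix of $S_{i'}$). Hence, when $\ell_i < |S_i|$ the shortest unique prefix of $S_i$ has length $\ell_i + 1$, giving $\lsus_i^0 = S[i .. i+\ell_i]$; and when $\ell_i = |S_i| = n-i+1$ even $S_i$ itself is a repeat, so no unique prefix of $S_i$ exists and $\lsus_i^0$ is undefined. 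These two cases correspond precisely to $i+\ell_i \leq n$ and $i+\ell_i > n$.

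It remains to prove $\ell_i = \max\{x_i, y_i\}$. The inequality $\ell_i \geq \max\{x_i, y_i\}$ is immediate, since $x_i$ and $y_i$ are by definition LCP lengths between $S_i$ and other suffixes and therefore witness repeated prefixes of $S_i$. For the reverse inequality, I would invoke the standard suffix-array monotonicity property: for any three indices $a < b < c$ in suffix-array order, $|\lcp^0(S_{\sa[a]}, S_{\sa[c]})| \leq |\lcp^0(S_{\sa[b]}, S_{\sa[c]})|$. This follows from the observation that if two suffixes share a common prefix of length $m$, then every suffix lying lexicographically between them must also begin with that prefix. Applying the property with $S_i$ as an endpoint, any other suffix $S_j$ satisfies $|\lcp^0(S_i, S_j)| \leq x_i$ when $\rank[j] < \rank[i]$, and $|\lcp^0(S_i, S_j)| \leq y_i$ when $\rank[j] > \rank[i]$. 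In either case $|\lcp^0(S_i, S_j)| \leq \max\{x_i, y_i\}$, and taking the maximum over $j \neq i$ yields $\ell_i \leq \max\{x_i, y_i\}$.

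The only step requiring genuine care is the suffix-array monotonicity argument; everything else is bookkeeping from the definitions. Combining the two directions gives $\ell_i = \max\{x_i, y_i\}$, after which the two cases of the fact follow immediately from the preceding characterization of $\lsus_i^0$ in terms of $\ell_i$.
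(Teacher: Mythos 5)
Your proof is correct. The paper states this as a Fact without giving any proof, and your argument --- reducing $\lsus_i^0$ to the longest repeated prefix of the suffix $S_i$, and then using the lexicographic-interval (LCP monotonicity) property of the suffix array to show that this longest repeated prefix length equals $\max\{x_i,y_i\}$, i.e., is attained at one of the two neighbours of $S_i$ in suffix-array order --- is precisely the standard justification the authors implicitly rely on, with the edge cases ($\ell_i=0$ and $\ell_i=|S_i|$) handled correctly.
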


First, observe that in the sequence of $x_i$'s, if $x_i > 0$, then $x_{i+1}
\geq x_i-1$ must be true, because at least $S[\sa[\rank[i]-1]+1 ..
n]$ can be the lexicographically preceding suffix of $S[i+1.. n]$,
and they share the leading $x_i-1$ characters.  That means, when we compute
$x_{i+1}$, we can skip over the comparisons of the first $x_i-1$ pair
of characters between $S[i+1.. n]$ and its lexicographically
preceding suffix. It follows that, given the $\sa$ and $\rank$ of $S$ and using the
above observation, we can compute the sequence of $x_i$'s in $O(n)$
time. 
%
Using the similar observation, we can compute the sequence of $y_i$'s in
$O(n)$ time, provided that $S$ and its $\sa$ and $\rank$ are given.

Second, since we can compute the sequences of $x_i$'s and $y_i$'s in
parallel (i.e., compute the sequence of ($x_i,y_i$) pairs), we can use
Fact~\ref{fact:lsus-length} to compute the sequence of $\lsus_i^0$ in
$O(n)$ time.  Further, since $\rank[i]$ is used only for retrieving
the lexicographically preceding and succeeding suffixes of $S[i..n]$
when we compute the pair ($x_i, y_i$), we can save each computed
$\lsus_i^0$ (indeed, $i+\max\{x_i,y_i\}$, the ending position of
$\lsus_i^0$) in the place of $\rank[i]$ (i.e., $B[i]$).  In the case
$i + \max\{x_i,y_i\} > n$, meaning $\lsus_i^0$ does not exist,we will
assign {\tt NIL} to $\rank[j]$ (i.e., $B[j]$) for all $j\geq i$
(Lemma~\ref{lem:exist}). The overall time cost for computing the
sequence of $\lsus_i^0$ is thus $O(n)$, yielding the following lemma.

\begin{lemma}
\label{lem:lsus-exact}
  Given the character array $S$ of size $n$ that saves the input string, 
  and the integer arrays $A$ and $B$, each of size $n$, we can work
  in the place of $S$, $A$, and $B$, using $O(n)$ time, such that at
  the end of the computation, $S$ does not change, $B[i]$ saves the
  ending of position of $\lsus_i^0$, if $\lsus_i^0$ exists (otherwise,
  $B[i]={\tt NIL}$).
\end{lemma}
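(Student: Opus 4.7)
The plan is to realize the lemma in three phases, each operating in the place of $A$ and $B$. First, I would invoke the suffix-array construction of Lemma~\ref{lem:sa}: using $A$ as the output array for $\sa$ and $B$ as the $\sigma$-word workspace (we may assume $\sigma\le n$, otherwise unused alphabet symbols can be pruned in a preliminary $O(n)$ scan), this produces $\sa$ in $A$ in $O(n)$ time without touching $S$. Second, I would build the rank array $\rank$ in $B$ by a single sweep: for $i=1,\ldots,n$, set $B[\,A[i]\,] \leftarrow i$. These two phases cost $O(n)$ in total, and after them $A$ holds $\sa$ and $B$ holds $\rank$.

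The third phase computes $x_i$ and $y_i$ from Definition~\ref{def:hl} simultaneously, and from them the ending position of $\lsus_i^0$ via Fact~\ref{fact:lsus-length}. The standard Kasai-style trick gives both in $O(n)$ total: since $x_{i+1} \geq x_i - 1$ (because $S[\sa[\rank[i]-1]+1\,..\,n]$ is a legitimate candidate lexicographic neighbour of $S[i+1\,..\,n]$ sharing at least $x_i-1$ leading characters with it), the character comparisons performed when advancing from $x_i$ to $x_{i+1}$ can reuse the prefix match of length $x_i - 1$, so the total number of explicit comparisons telescopes to $O(n)$; a symmetric argument handles $y_i$. To extract the neighbour suffixes at position $i$, I use $\rank[i]=B[i]$ to index into $\sa = A$, and then compare characters directly in the read-only array $S$.

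The subtle point, and the only real obstacle, is that $B$ must simultaneously store the $\rank$ array we are reading and the endpoints $i+\max\{x_i,y_i\}$ we are writing. I would traverse $i = 1,2,\ldots,n$ and, at each step, first fetch $\rank[i]$ from $B[i]$, use it together with $A$ to retrieve the two lexicographic neighbours and compute $(x_i,y_i)$, and only then overwrite $B[i]$ with $i+\max\{x_i,y_i\}$. The correctness of the overwrite rests on the observation that $\rank[i]$ is never consulted again after step $i$: the monotonicity argument for $x_{i+1}$ and $y_{i+1}$ looks only at the scalars $x_i, y_i$ (kept in two registers) and at entries of $A$, not at $B[i]$. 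When $i+\max\{x_i,y_i\}>n$, so that $\lsus_i^0$ does not exist, Lemma~\ref{lem:exist}(3) implies $\lsus_j^0$ does not exist for any $j\ge i$, hence I can fill $B[i\,..\,n]$ with $\nil$ and terminate. Summing the three phases yields the $O(n)$ bound, leaves $S$ untouched, and leaves $B$ encoding precisely the claimed endpoints.
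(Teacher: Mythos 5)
Your proposal is correct and follows essentially the same route as the paper: construct $\sa$ in $A$ with $B$ as workspace via Lemma~\ref{lem:sa}, build $\rank$ in $B$, compute the $(x_i,y_i)$ pairs with the Kasai-style telescoping argument, overwrite $B[i]$ with $i+\max\{x_i,y_i\}$ once $\rank[i]$ is no longer needed, and fill the tail with $\nil$ via Lemma~\ref{lem:exist}. Your explicit justification of why the overwrite of $B[i]$ is safe is a slightly more careful articulation of the same point the paper makes in passing.
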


\subsection{Finding approximate LSUS ($k \geq 1$)}
\label{sec:lsus-k}

\begin{definition}
\label{def:llr}
For a particular string position $p$ in $S$ 
and an integer $k$, $0\leq k\leq n-1$, 
\emph{the $k$-mismatch left-bounded longest repeat (LLR)}
starting at position $p$, denoted as $\llr_p^k$, is a $k$-mismatch
repeat $S[p..j]$, 
such that either $j=n$ or $S[p .. j+1]$ is
$k$-mismatch unique.   
\end{definition}

\begin{fact}
\label{fact:llr}
(1) If $|\llr_p^k|< n - p + 1$, i.e., the ending position of $\llr_p^k$ is
less than $n$, then $\lsus_p^k = S\bigl[p ..
p+|\llr_p^k|\bigr]$, the substring of
$\llr_p^k$ appended by the character following $\llr_p^k$.
(2) Otherwise, $\lsus_p^k$ does not exist. 
\end{fact}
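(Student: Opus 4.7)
The plan is to reduce both parts to a single auxiliary observation: every prefix (sharing the same start position) of a $k$-mismatch repeat is itself a $k$-mismatch repeat. First I would prove this observation directly from the definition. If $S[p..j]$ is a $k$-mismatch repeat witnessed by some $S[i'..j']$ with $i' \neq p$, $j' - i' = j - p$, and $H(S[p..j], S[i'..j']) \leq k$, then for any $j''$ with $p \leq j'' \leq j$ the substring $S[i'..i' + j'' - p]$ has the same length as $S[p..j'']$, starts at a position $i' \neq p$, and has Hamming distance to $S[p..j'']$ at most $k$, since dropping trailing positions cannot increase the mismatch count. Hence $S[p..j'']$ is itself a $k$-mismatch repeat.

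Next I would handle part (1). Assuming $|\llr_p^k| < n - p + 1$, we have $\llr_p^k = S[p..p+|\llr_p^k|-1]$ and $p + |\llr_p^k| \leq n$. By the definition of $\llr_p^k$, the one-character extension $S[p..p+|\llr_p^k|]$ is $k$-mismatch unique. Meanwhile, every strictly shorter substring $S[p..j'']$ starting at $p$ is a prefix of the $k$-mismatch repeat $\llr_p^k$ and, by the observation above, is itself a $k$-mismatch repeat. Therefore $S[p..p+|\llr_p^k|]$ is the shortest $k$-mismatch unique substring starting at $p$, which matches Definition~\ref{def:lsus} of $\lsus_p^k$.

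For part (2), the hypothesis $|\llr_p^k| = n - p + 1$ forces $\llr_p^k = S[p..n]$, a $k$-mismatch repeat. Any candidate for $\lsus_p^k$ must have the form $S[p..j'']$ with $p \leq j'' \leq n$, hence is a prefix of $\llr_p^k$, and is therefore a $k$-mismatch repeat by the observation. No $k$-mismatch unique substring starts at position $p$, so $\lsus_p^k$ does not exist.

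I expect the prefix-of-repeat observation to be the only substantive step. The single subtlety is that the witness occurrence $S[i'..j']$ may overlap with $S[p..j]$, but since $k$-mismatch uniqueness is defined by distinct start positions rather than disjoint ranges, the truncated witness $S[i'..i' + j'' - p]$ remains a valid distinct occurrence, and no additional care is required.
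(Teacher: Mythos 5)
Your proposal is correct. Note that the paper never proves Fact~\ref{fact:llr}: it is asserted as immediate from Definitions~\ref{def:lsus} and~\ref{def:llr}, so there is no authorial proof to diverge from. The single ingredient you isolate --- every prefix of a $k$-mismatch repeat sharing its start position is again a $k$-mismatch repeat, obtained by truncating the witness occurrence, which can only decrease the Hamming distance --- is precisely the closure property the paper uses implicitly elsewhere (it is the contrapositive of ``a right extension of a $k$-mismatch unique substring is unique,'' invoked in the proofs of Lemma~\ref{lem:exist} and Lemma~\ref{lem:ext2}). Your handling of the overlap issue and of part (2) is sound; the only unaddressed corner case is $|\llr_p^k|=0$ (when $S[p]$ is already $k$-mismatch unique and $\llr_p^k$ is empty), where the conclusion $\lsus_p^k=S[p..p]$ follows from the $p=j$ clause of Definition~\ref{def:lsus} rather than from the prefix observation --- a triviality the paper also glosses over.
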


Our high-level strategy for finding $\lsus_i^k$ for all $i$ is as
follows.  We first find $\llr_i^k$ for all $i$. Then we use
Fact~\ref{fact:llr} to find each $\lsus_i^k$ from $\llr_i^k$: If
$\llr_i^k$ does not end on position $n$, we will extend it 
for one more character on its right side and make the extension to be
$\lsus_i^k$; otherwise, $\llr_i^k$ does not exist. Next, we 
explain how to find $\llr_i^k$, for all $i$. 

Clearly, $|\llr_i^k| = \max\{|\lcp^k(S_i, S_j)|, j\neq i\}$, for all
i. The way we calculate $|\llr_i^k|$ for all $i$ is simply to let
every pair of two distinct suffixes to be compared with each other. In
order to do so, we work over $n-1$ phases, named as $\mathcal{P}_1$
through $\mathcal{P}_{n-1}$. On a particular phase
$\mathcal{P}_\delta$, we compare suffixes $S_i$ and $S_{i-\delta}$ for
all $i=n,n-1,\ldots,\delta+1$. Obviously, over these $n-1$ phases,
every pair of distinct suffixes have been compared with each other
exactly once. 
Over these $n-1$ phases, we simply record in $B[i]$, which is
initialized to be $0$, the length of 
the longest $k$-mismatch LCP that each suffix $S_i$
has seen when compared with any other suffixes. 
Next, we explain the details of a particular phase
$\mathcal{P}_\delta$.

On a particular phase $\mathcal{P}_\delta$, $1\leq \delta\leq n-1$, we
compare suffixes $S_i$ and $S_{i-\delta}$ for all
$i=n,n-1,\ldots,\delta+1$. When we compare $S_i$ and $S_{i-\delta}$,
we save in $A[1.. k+1]$, which is initialized to be empty at the
beginning of each phase, the leftmost mismatched $k+1$ positions in
$S_i$. We will see later how to update $A[1.. k+1]$ efficiently
over the progress of a particular phase and use it to update the $B$
array. 

We treat $A[1.. k+1]$ as a circular array,
i.e., $i-1 = k+1$ when $i=1$, and $i+1=1$ when $i=k+1$. Let ${\tt
  size}$, which is initialized to be $0$ at the beginning of each
phase, denote the number of mismatched positions being saved in
$A[1.. k+1]$ so far in $\mathcal{P}_\delta$. We can describe the
work of phase $\mathcal{P}_\delta$, inductively, as follows.

\begin{enumerate}
\item We compare $S_n$ and $S_{n-\delta}$ by only comparing 
$S[n]$ and $S[n-\delta]$, since $S_n = S[n]$.
\begin{enumerate}
\item If $S[n] \neq S[n-\delta]$:
Save $n$ in any position in $A[1.. k+1]$;
${\tt size} \leftarrow 1$.

\item 
$B[n]\leftarrow \max\{B[n], 1\}$;
$B[n-\delta]\leftarrow \max\{B[n-\delta], 1\}$.

\end{enumerate} 

\item 
  Suppose we have finished the comparison between the suffixes
  $S_{i+1}$ and $S_{i+1-\delta}$, for some $i$, $\delta+1 \leq i \leq n-1$.
  The leftmost $k+1$ mismatched positions (if existing) between them
  have been saved in the circular array $A[1.. k+1]$. Let $A[{\tt cursor}]$
  be the element that 
  is saving the first mismatched position (if existing) between the two
  suffixes. 

\item  Next, we compare the suffixes $S_i$ and $S_{i-\delta}$ by
  only comparing $S[i]$ and $S[i-\delta]$, since $S_{i+1}$ and
  $S_{i+1-\delta}$ have been compared. Remind that ${\tt cursor}-1$
  below is in its cyclic manner.

  \begin{enumerate}
  \item If $S[i]\neq S[i-\delta]$: ${\tt cursor}\leftarrow {\tt
      cursor}-1$;  Save $i$
    in $A[{\tt cursor}]$ and overwrite the old content in $A[{\tt
      cursor}]$ if
    there is; ${\tt size} \leftarrow \min\{{\tt size+1}, k+1\}$.

  \item If ${\tt size} < k+1$: 
    $B[i]\leftarrow \max\{B[i], n-i+1\}$;
    $B[i-\delta]\leftarrow \max\{B[i-\delta], n-i+1\}$.
  \item Else:
    $B[i]\leftarrow \max\{B[i], A[{\tt cursor}-1]-i\}$;
    $B[i-\delta]\leftarrow \max\{B[i-\delta], A[{\tt cursor}-1]-i\}$.
    Note that $A[{\tt cursor}-1]$ is saving the $(k+1)$th
    mismatched position between $S_i$ and $S_{i-\delta}$.
\end{enumerate}

\end{enumerate}

After the computation of all $\llr_i^k$ is finished, using the above 
$n-1$ phases, each $B[i]$ is saving $|\llr_i^k|$.
Next,
we can use Fact~\ref{fact:llr} to convert each $\llr_i^k$ to
$\lsus_i^k$ by simply checking each $B[i]$: If $i+B[i]-1<n$, i.e.,
$\llr_i^k$ does not end on position $n$, then we assigne $B[i] =
i+B[i]$, the ending position of $\lsus_i^k$; otherwise, we assign
$B[i] = {\tt NIL}$, meaning $\lsus_i^k$ does not exist.

\medskip 

The computation 
of all $\llr_i^k$ takes
 $n-1$ phases and each phase clearly has no more
than $n$ comparisons, giving a total of $O(n^2)$ time cost. 
The procedure of converting each $\llr_i^k$ to $\lsus_i^k$ spends
another $O(n)$ time. Altogether, we get an $O(n^2)$-time in-place
procedure for finding approximate LSUS, for any $k\geq 1$.

\begin{lemma}
\label{lem:lsus-k}
  Given the character array $S$ of  size $n$ that saves the input string, 
  the integer arrays $A$ and $B$, each of size $n$, and the value
  of integer $k\geq 1$, we can work
  in the place of $S$, $A$, and $B$, using $O(n^2)$ time, such that at
  the end of the computation, $S$ does not change, $B[i]$ saves the
  ending of position of $\lsus_i^k$,
i.e., $\lsus_i^k = S[i.. B[i]]$,
 if $\lsus_i^k$ exists; otherwise,
  $B[i]={\tt NIL}$.
\end{lemma}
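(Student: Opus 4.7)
My plan is to reduce the lemma to the correctness of two sub-procedures: (i) the $n-1$-phase comparison algorithm that, after all phases, leaves $B[i] = |\llr_i^k|$ for every $i$, and (ii) the linear-time post-processing that applies Fact~\ref{fact:llr} to convert each $|\llr_i^k|$ into the ending position of $\lsus_i^k$ (or $\nil$). The post-processing is immediate: for each $i$, check whether $i+B[i]-1<n$ and either set $B[i]\gets i+B[i]$ or $B[i]\gets\nil$; correctness and the $O(n)$ bound follow directly from Fact~\ref{fact:llr}. So the real technical content of the proof lies in establishing (i), which I would carry out by maintaining a loop invariant on the circular buffer $A[1..k+1]$.

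The key invariant I would prove, by induction on $i$ descending from $n$ down to $\delta+1$ inside phase $\mathcal{P}_\delta$, is the following: after processing the pair $(S_i, S_{i-\delta})$, the entries $A[\mathrm{cursor}], A[\mathrm{cursor}+1],\dots$ read cyclically for $\mathrm{size}$ positions are exactly the $\mathrm{size}$ leftmost mismatched positions between $S_i$ and $S_{i-\delta}$ in left-to-right order, where $\mathrm{size}=\min\{m,k+1\}$ and $m$ is the total number of mismatches between these two suffixes. The inductive step reduces to the elementary observation that the leftmost $k+1$ mismatches of $(S_i, S_{i-\delta})$ are obtained from those of $(S_{i+1}, S_{i+1-\delta})$ by prepending position $i$ when $S[i]\neq S[i-\delta]$ and truncating to at most $k+1$ entries. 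Decrementing $\mathrm{cursor}$ and overwriting $A[\mathrm{cursor}]$ realizes precisely this prepend-and-truncate operation in $O(1)$, and this is the step I expect to be the main subtlety; the rest of the per-iteration bookkeeping is mechanical.

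Given the invariant, $|\lcp^k(S_i, S_{i-\delta})|$ is read off in $O(1)$: if $\mathrm{size}<k+1$ then $S[i..n]$ and $S[i-\delta..n-\delta]$ agree on their entire overlap within $k$ mismatches, so $|\lcp^k|=n-i+1$; if $\mathrm{size}=k+1$, then $A[\mathrm{cursor}-1]$ stores the $(k+1)$-th leftmost mismatch $q$ and $|\lcp^k|=q-i$. Both values feed the $\max$-updates to $B[i]$ and $B[i-\delta]$. Across the $n-1$ phases every unordered pair of distinct suffixes is compared exactly once, so at termination $B[j]=\max_{j'\neq j}|\lcp^k(S_j,S_{j'})|=|\llr_j^k|$, as required for step (i). Finally, the running time is $n-1$ phases of $O(n)$ work each plus the $O(n)$ conversion, giving $O(n^2)$; only the arrays $S$, $A$, $B$ are touched, and $S$ is never modified, so the in-place requirement is met.
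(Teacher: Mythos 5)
Your proposal is correct and takes essentially the same route as the paper: the same decomposition into $n-1$ pairwise-comparison phases that compute $|\llr_i^k|$ via the circular buffer of the $k+1$ leftmost mismatched positions (updated by the prepend-and-truncate step), followed by the $O(n)$ conversion through Fact~\ref{fact:llr}. The loop invariant you state is precisely the inductive description the paper gives of a phase, just made explicit.
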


\section{Finding $k$-mismatch SLS}
\label{sec:sls}
Now we are given the array $B$, where each $B[i]$ saves the ending
position of $\lsus_i^k$ if $\lsus_i^k$ exists and {\tt NIL}
otherwise. In this section, we want to work in the place of $A$ and
$B$, such that in the end of computation: $A[i]$ saves $j$, such that
$\lsus_j^k$ is the rightmost $\sls_i^k$, if such $j$ exists;
otherwise, $A[i]={\tt NIL}$.
That means, in the end of this section, the rightmost $\sls_i^k =
S\bigl[A[i].. B[A[i]]\bigr]$, if $\sls_i^k$ exists; otherwise,
$A[i] = B[i] = {\tt NIL}$.

Recall that some $k$-mismatch LSUS may not exist and some positions
may not be covered by any $k$-mismatch LSUS (see the examples after
Definition~\ref{def:lsus}). Further, due to Lemmas~\ref{lem:exist}
and~\ref{lem:lsus2}, we know such positions that are not covered by any
$k$-mismatch LSUS must comprise a continuous chunk on the right end of
string $S$.

\begin{definition}
\label{def:rz}
Let $\lsus_r^k$, $1\leq r\leq n$, be the rightmost existing
$k$-mismatch LSUS of the input string $S$.
Let $z$, $1\leq z\leq n$, be the rightmost string position that is
covered by any $k$-mismatch LSUS  of the string $S$.
\end{definition}

Again, due to Lemmas~\ref{lem:exist} and~\ref{lem:lsus2}, it is trivial
to find the values of $r$ and $z$ in $O(n)$ time: scan array $B$ (i.e.
LSUS array) from right to left, and stop when seeing the first
non-NIL $B$ array element, which is exactly $B[r]$, then $z = B[r]$.
If $z<n$, we can then simply set $A[i]={\tt NIL}$
for all $i>z$. Recall that $B[i]={\tt NIL}$
already for all $i>r$ from stage~1. In the rest of this section, we 
only need to work with the two subarrays $A[1..z]$ and $B[1..z]$, 
wanting to make $A[i]$ to be the start position of the rightmost
$\sls_i^k$, for all $i\leq z$.

Let $B[1..z]$ and an integer $r$, $1\leq r\leq z$, be the input, where
(1)~$B[1..r]$ is of monotonically nondecreasing integers
(Lemma~\ref{lem:lsus2}), with $i \leq B[i]$, (2)~$B[r+1.. z]$ are
all {\tt NIL}, if $r<z$, and (3)~$B[r]=z$.

We can use each $B[i]$, $i\leq r$, as a compact representation of the
interval $I_i=(i, B[i])$. Let ${\cal I} = \{\; I_i \mid i \in [1..r] \;
\}$, and $\ell_i = |B[i] - i + 1|$ be the length of $I_i$. Let
$A[1..z]$ be an output array such that $A[j]=i$, where $I_i$ is the
rightmost shortest interval in ${\cal I}$ that covers $j$.

\medskip

%
To illustrate the ideas and
concepts that we will present in the rest of this section, 
let us use the following as a running example, 
where $r=9$, $z=15$, and $n=17$
(we add
$(0,B[0])=(0,0)$ as a sentinel).

{\small
\begin{center}
$
\begin{array}{||c||c|c|c|c|c|c|c|c|c|c|c|c|c|c|c|c|c|c||}
\hline 
i & 0   &  1 &  2 &  3 &  4 &  5 &  6 &  7 &  8 &  9 & 10 & 11 & 12 &
13 & 14 & 15 & 16 & 17 \\\hline
B[i] & 0 &  3 &  4 &  7 &  8 & 10 & 10 & 10 & 11 & 15 & - & - & - & -
& - & - & - & - \\\hline
\ell_i & 0 &  3 &  3 &  5 &  5 &  6 &  5 &  4 &  4 &  7 & - & - & - & -
& - & - & - & - \\ \hline \hline
\pred[i] & - & -& -&  2 &  2 &  4 &  2 &  2 &  2 &  8 & -& -& -& -& -&
-& -& -\\\hline
t_i  & - &  1 &  2 &  5 &  5 &  9 &  6 &  7 &  8 & 12 & -& -& -& -& -&
-& -& -\\\hline
\max\ t_i^{-1} & - &  1 &  2 & -& -&  4 &  6 &  7 &  8 &  5 & -& -&  9 & -&
-& -& -& -\\\hline\hline
A[i]  & - &  1 &  2 &  2 &  2 &  4 &  6 &  7 &  8 &  8 &  8 &  8 &  9
&  9 &  9 &  9 & -& -\\\hline
\end{array}
$
\end{center}
}

\remove{
\begin{center}
$
\begin{array}{||c||c|c|c|c|c|c|c|c|c|c||}
\hline 
i           & 0 & 1 & 2 & 3 & 4 & 5 &  6 &  7  & 8   &  9 \\ \hline
B[i]      & 0 & 3 & 4 & 7 & 8 & 10 & 10 & 10 & 11 & 15 \\ \hline
\ell_i     & 0 & 3 & 3 & 5 & 5 & 6 & 5  & 4 & 4 & 7 \\ \hline \hline
\pred[i]   & - & - & - & 2 & 2 & 4 & 2 & 2 & 2 & 8 \\ \hline
t_i         & - & 1 & 2 & 5 & 5 & 9 & 6 & 7 & 8 & 12 \\ \hline
t_i^{-1} & - & 1 & 2 & - & - & 3, 4 & 6 & 7 & 8 & 5 \\ \hline 
\max\ t_i^{-1} & - & 1 & 2 & - & - & 4 & 6 & 7 & 8 & 5 \\ \hline \hline 
A[i]       & - & 1 & 2 & 2 & 2 & 4 & 6 & 7 & 8 & 8 \\ \hline 
\end{array}
$
\end{center}
}

\begin{definition}
For an interval $I_i$, we define the \emph{effective covering region} 
with respect to the previous intervals 
${\cal I}_{< i} = \{\; I_k \mid k < i \; \}$ to be $\bigl[t_i, B[i]\bigr]$ where 
\[
t_i = \max\;  \Big\{\; i,\;\; \max\; \{ B[k] + 1 \mid I_k 
\mbox{\ is shorter than\ } I_i, k < i \; \}\; \Big\}. 
\]
We call $t_i$ the starting point of the effective covering region of $I_i$.
\end{definition}
The effective covering region of $I_i$ is exactly those regions that
would set $I_i$ as the answer, provided that all the intervals $I_{<
  i}$ before $I_i$ are present, and all the intervals $I_{> i} = \{\;
I_k \mid k > i\; \}$ are absent.

\medskip

\noindent
We next define $t_i^{-1}$ as a list\footnote{In actual run,
  $t_i^{-1}$ saves the largest number in that list, as we will see
  more clearly later.}, such that $j \in
t_i^{-1}$ if and only if $t_j = i$. Observe that since $t_i \geq i$ by
definition, any value $j$ in $t_i^{-1}$ must have $j \leq i$, and the
effective region of $I_j$ must cover $i$.

\begin{lemma}
\label{lem:b}
For $i = 1,2,\ldots, z$:
\[A[i] = \max\; \bigcup_{k=1}^i t_k^{-1} = \max\; 
\{\; A[i-1],\; \max\; t_i^{-1}\; \}.\]
\end{lemma}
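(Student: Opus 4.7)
The second equality is immediate from the inductive hypothesis: splitting $\bigcup_{k=1}^{i} t_k^{-1}=\bigcup_{k=1}^{i-1} t_k^{-1}\cup t_i^{-1}$ and invoking $A[i-1]=\max\bigcup_{k=1}^{i-1}t_k^{-1}$ gives $\max\bigcup_{k=1}^{i} t_k^{-1}=\max\{A[i-1],\max t_i^{-1}\}$ (with $A[0]$ taken as $-\infty$). So the real task is the first equality, $A[i]=M_i$, where $M_i:=\max\{j:t_j\le i\}$. The plan is to route through an intermediate quantity $j^{+}:=\max\{j : t_j\le i\le B[j]\}$---the largest index whose effective covering region $[t_j,B[j]]$ contains $i$---and show in turn that $A[i]=j^{+}$ and $j^{+}=M_i$.

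For $A[i]=j^{+}$, I would first recast the inequality $t_j\le i\le B[j]$ using the definition of $t_j$ as: ``$I_j$ covers $i$, and no $I_k$ with $k<j$ and $\ell_k<\ell_j$ covers $i$'', which in turn is equivalent to ``$I_j$ is the rightmost-shortest interval of $\{I_1,\dots,I_j\}$ that covers $i$''. With this equivalence in hand, I rule out the two ways $A[i]$ could differ from $j^{+}$. If $A[i]>j^{+}$, then $I_{A[i]}$ covers $i$ but $t_{A[i]}>i$; unfolding the defining max of $t_{A[i]}$ produces some $k<A[i]$ with $\ell_k<\ell_{A[i]}$ and $B[k]\ge i$, and this strictly shorter $I_k$ also covers $i$, contradicting the minimality of $\ell_{A[i]}$. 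If $A[i]<j^{+}$, then either $\ell_{A[i]}<\ell_{j^{+}}$, contradicting that $I_{j^{+}}$ is shortest among $\{I_1,\dots,I_{j^{+}}\}$ covering $i$; or $\ell_{A[i]}=\ell_{j^{+}}$, in which case $j^{+}>A[i]$ and $I_{j^{+}}$ covers $i$ with the same minimum length, contradicting the rightmost-tiebreak built into $A[i]$.

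The main obstacle---and the step that genuinely uses the nondecreasingness of $B[1..r]$ coming from Lemma~\ref{lem:lsus2}---is $j^{+}=M_i$. Since trivially $j^{+}\le M_i$, it suffices to prove $B[M_i]\ge i$. I would split into three cases. If $M_i=r$, then $B[M_i]=B[r]=z\ge i$ by the choice of $z$ in Definition~\ref{def:rz}. If $M_i=i$ (noting $M_i\le i$ always, because $t_j\ge j$), then $B[i]\ge i$ by the standing assumption $j\le B[j]$. Otherwise $M_i<\min\{r,i\}$, so $t_{M_i+1}>i\ge M_i+1$, which forces the defining max of $t_{M_i+1}$ to be attained by its second term and yields some $k\le M_i$ with $B[k]\ge i$; the monotonicity $B[k]\le B[M_i]$ then gives $B[M_i]\ge i$. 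Assembling the two identifications yields $A[i]=M_i=\max\bigcup_{k=1}^i t_k^{-1}$, which together with the recurrence noted at the start completes the proof.
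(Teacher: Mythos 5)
Your proof is correct and follows essentially the same route as the paper's: both identify $A[i]$ with the largest index whose effective covering region contains $i$, use the monotonicity of $B[1..r]$ to show that this index's interval actually covers $i$, and unfold the definition of $t_j$ to prune every other candidate. Your write-up is just more explicit than the paper's --- the intermediate quantity $j^{+}$ and the three-case derivation of $B[M_i]\ge i$ replace the paper's terser ``otherwise no interval would cover $i\le z$'' contradiction and its three-way pruning of the indices $h<j$ --- but the underlying ideas coincide.
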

\begin{proof}
  Let $j = \max\; \bigcup_{k=1}^i t_k^{-1}.$ This means that for the
  effective region of any $I_h$, with $h > j$, none of them covers
  $i$.  Next, observe that $I_j$ must cover $i$; otherwise, for all
  the intervals $I_h$ with $h < j$, we have $B[h] \leq B[j] < i$, so
  that none of them can cover $i$, and thus a contradiction occurs.
Finally, we show that for those $h < j$, $I_h$ can be
    pruned by $I_j$, thus implying that $A[i] = j$ is a correct
    answer.

\medskip

\noindent
Consider all those $h$ with $h < j$:  
\begin{enumerate}
\item If $I_h$ is longer than $I_j$, $I_h$ can be pruned away
  directly.

\item Else, if $I_h$ and $I_j$ have equal length, $I_h$ can be pruned
  away also, regardless of its coverage on $i$,
since we pick the rightmost shortest interval that covers
  $i$.

\item Else, $h$ must appear in $\bigcup_{k=1}^i t_k^{-1}$. By the
  definition of $t_j$, we have $B[h] < t_j \leq i$; thus, $I_h$ does
  not cover $i$, and can be pruned away.
\end{enumerate}
\remove{
\begin{enumerate}
\item If $I_h$ is no shorter than $I_j$, $I_h$ can be pruned away
  directly.

\item Else, if $h$ appears in $\bigcup_{k=1}^i t_k^{-1}$, by the
  definition of $t_j$, we have $B[h] < t_j \leq i$; thus. $I_h$ does
  not cover $i$, and can be pruned away.

\item Else, $h$ does not appear in $\bigcup_{k=1}^i t_k^{-1}$.  Then,
  we have $t_h > i$, so that the effective region of $I_h$ does not
  cover $i$; so $I_h$ can be pruned away.
\end{enumerate}
}
Thus, the first equality in the lemma follows, while the second
equality in the lemma is trivial once we have the first equality. 
\qed 
\end{proof}

\begin{lemma}
Suppose that all $t_i$, $1\leq i\leq r$, can be 
generated incrementally in $O(n)$ time.  
Then, we can obtain all $\max\ t_i^{-1}$, $1\leq i\leq z$, 
in $O(n)$ time.
\end{lemma}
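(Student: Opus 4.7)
The plan is to invert the function $t$ on the fly. As the assumed incremental generator emits $t_1, t_2, \ldots, t_r$ in this order, I would write the emitting index $j$ into slot $A[t_j]$, using the $A$ array itself (free workspace on entry to stage~2) as both scratch and output buffer for the values $\max t_i^{-1}$.

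Concretely, I would first initialize $A[1..z] \leftarrow \text{NIL}$ in a single pass; this costs $O(n)$ time. I would then invoke the assumed $O(n)$-time generator and, upon receiving each value $t_j$, execute the single assignment $A[t_j] \leftarrow j$, guarded by $1 \leq t_j \leq z$ so that stray values (e.g., when some $B[k]+1$ overshoots $z$) do not escape the relevant prefix of $A$.

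Correctness would rest on two observations. First, by the definition of $t_i^{-1}$, an index $j$ belongs to $t_i^{-1}$ exactly when $t_j = i$, so the $j$-th update touches precisely the right slot. Second, since $j$ is processed in strictly increasing order, whenever the same slot $A[i]$ is written multiple times the final surviving value is the largest such $j$, which is exactly $\max t_i^{-1}$. Slots that are never written correspond to empty $t_i^{-1}$ and legitimately retain the NIL sentinel, consistent with the convention under which the max in Lemma~\ref{lem:b} is taken.

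The running time is immediate: $O(n)$ for the initialization, $O(n)$ for the generator by hypothesis, and $O(1)$ per update across at most $r \leq n$ updates. The one conceptual point---what I would call the main (and really only) obstacle---is recognizing that the monotone processing order lets us replace a priority-queue-style max by a plain overwrite, so the entire inversion collapses into one in-place left-to-right sweep with no extra space beyond $A$ and no second pass.
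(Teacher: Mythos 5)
Your proof is correct and is essentially the paper's own argument: process $j=1,\ldots,r$ in increasing order, write $j$ into slot $t_j$, and let later (larger) writes overwrite earlier ones so the surviving value is $\max t_i^{-1}$, all in $O(n)$ time. (The paper defers the genuinely in-place version---where $t$ itself occupies $A$ and a right-to-left scan is needed to avoid clobbering unread entries---to the subsequent corollary, but that is outside what this lemma asserts.)
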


\begin{proof}
  We examine each $t_i$, $i=1,2,\ldots,r$, and write $i$
  at entry $t_i = j$ of
  the $t^{-1}$ array; if such an entry contains a value $i'$ already,
  we simply overwrite $i'$ with the latter $i$. \qed
\end{proof}

\noindent
Indeed, we may scan $t_i$ from right to left, i.e., $i=r,r-1,\ldots,
1$, and update $\max\ t_i^{-1}$ as we proceed. Firstly, if $t_i > i$,
we set $t_i^{-1} = \mbox{undefined}$. Else, let $j = t_i$ (whose value
is at least $i$), and we check if $t^{-1}_{j}$ is defined: If not,
simply set $t^{-1}_j = i$; otherwise, no update is needed.

\smallskip

\noindent
The advantage of the `right-to-left' approach is that we can construct
$t_i^{-1}$ in-place, by re-using the memory space of $t_i$.  To see
why it is so, by the time we need to update a certain entry $j = t_i$
at step $i$, the information $t_j$ has been used (and will never be
used), so that we can safely overwrite the original entry, storing
$t_j$, to store $t_j^{-1}$ instead.  This gives the following
corollary.

\begin{corollary}
\label{cor:t}
  Suppose that all $t_i$'s are generated, and are stored in a certain array
  $A[1..z]$. Then, we can obtain $\max\ t_i^{-1}$ for all $i$'s,
  in-place,
  by storing the results in the same array $A[1..z]$; the time cost is
  $O(n)$.
\end{corollary}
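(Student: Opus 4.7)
The plan is to spell out and verify the right-to-left in-place procedure sketched just before the corollary. I would initialize $A[r+1..z]$ to a sentinel $\mathtt{NIL}$ and then scan $i$ from $r$ down to $1$. At step $i$, first read $j := A[i]$ (which, as I argue below, is still the original $t_i$). If $j > i$, overwrite $A[i]$ with $\mathtt{NIL}$; then, in either case, examine $A[j]$ and, provided it is currently $\mathtt{NIL}$, write $i$ into it. Each step does $O(1)$ work and the total running time is $O(n)$.

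For correctness, the first thing to establish is the reading invariant: at the start of step $i$, cell $A[i]$ still stores the original $t_i$. Indeed, the only way $A[i]$ could have been overwritten by an earlier step $i' > i$ is via an action of the form $A[t_{i'}] \leftarrow i'$ with $t_{i'} = i$, but $t_{i'} \geq i' > i$ rules this out. Granted the invariant, fix any $j \in [1,z]$, set $K_j := \{\, k \in [1,r] : t_k = j\,\}$, and note $K_j \subseteq [1,j]$ since $t_k \geq k$. I would then argue by a short case split that $A[j]$ ends up equal to $\max K_j$, or remains $\mathtt{NIL}$ when $K_j = \emptyset$. In the case $j \leq r$ with $t_j = j$, step $j$ does not overwrite $A[j]$, and any subsequent candidate $k < j$ in $K_j$ finds $A[j]$ non-$\mathtt{NIL}$ and skips, so $A[j] = j = \max K_j$ at the end. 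Otherwise, step $j$ (or the initialization, if $j > r$) clears $A[j]$ to $\mathtt{NIL}$; if $K_j = \emptyset$ the cell is never touched again, while if $k^{*} := \max K_j$ exists, the maximality of $k^{*}$ ensures that no index strictly between $k^{*}$ and $j$ belongs to $K_j$, so step $k^{*}$ sees $A[j] = \mathtt{NIL}$ and installs $k^{*}$, after which every remaining $k < k^{*}$ in $K_j$ finds $A[j]$ already set and skips.

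The main obstacle is the dual role each cell $A[i]$ must play: hand over the input value $t_i$ when step $i$ starts, and later carry the output value $\max t_i^{-1}$. The reason the right-to-left scan succeeds is the clean range separation between those two quantities — input values satisfy $t_k \geq k$, whereas any value ever written into a cell $A[j]$ as a candidate for $\max t_j^{-1}$ is some $k \in K_j$ with $k \leq j$. Consequently, reads of $t_i$ at cell $A[i]$ are never preempted by earlier writes aimed at $A[i]$, and clearing $A[i]$ to $\mathtt{NIL}$ at step $i$ discards only information no later step will need.
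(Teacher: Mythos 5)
Your procedure is the same right-to-left in-place scan the paper itself describes and implements in Algorithm~3 (clear $A[i]$ to $\mathtt{NIL}$ when $t_i>i$, write $i$ into $A[t_i]$ only if that cell is currently $\mathtt{NIL}$), and your reading invariant is exactly the paper's key observation that every write targets an index $t_{i'}\geq i'>i$, so $t_i$ is never clobbered before step $i$ reads it. The argument is correct; you simply make explicit the first-writer-wins reasoning and the $t_j=j$ case that the paper leaves informal.
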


\noindent
Our goal is to make our algorithm in-place. Suppose that we can have
in-place incremental generation of $t_i$. Then, by the above lemma, we
may store $\max\ t_i^{-1}$ temporarily at $A[i]$; afterwards, by the
second equality of Lemma~\ref{lem:b}, we can compute the correct
output $A$ by a simple scan of $A$ from left to right.

Thus, to make the whole process in-place, it remains to show how $t_i$
can be computed in $O(n)$ time, in-place.  For this, we define
$\pred[i]$ to be the largest $j$ (if it exists) such that $j < i$ and
length of $I_j$ is shorter than $I_i$.  It is easy to check that if
$\pred[i] = j$ is defined, then $t_i = \max\; \{\; B[j]+1, i \; \}$
(and $t_i = i$ otherwise).\footnote{%
  For each $j' < j$, if $I_{j'}$ covers $i$, $I_j$ would also cover
  $i$; in such a case, $B[j] + 1 \geq B[j'] + 1$.  For each $j' \in
  [\pred[i], i-1]$, $I_{j'}$ is longer than $I_i$.}  Moreover,
$\pred[i]$ for all $i$'s can be computed incrementally, with a way
analogous to the construction of the failure function in KMP
algorithm: we check $\pred[i-1], \pred[\pred[i-1]],
\pred[\pred[\pred[i-1]],$ and so on, until we obtain $j$ in the
process such that $I_j$ is shorter than $I_i$, and set $\pred[i] :=
j$.\footnote{Intuitively, $\pred$ defines the shortcuts so that we can
  skip some intervals in $I_{<i}$ to compute $t_i$.} If such $j$ does
not exist, we set $\mathtt{pred}[i]=\mathtt{NIL}$. The running time
is bounded by $O(n)$.

\bigskip

\noindent
This gives the following $O(n)$-time in-place algorithm (where $B$ is read-only):
\begin{enumerate}
  \setlength{\itemsep}{0pt} \item Compute $\pred[i]$,
  $i=1,2,\ldots,r$, and store this in $A[i]$.
  Note that this step requires the length information of the intervals
  of $I_i$, which can be obtained in $O(1)$ time, on the fly, from
  $B[i]$ .
\item Scan $A[1..r]$ (i.e., $\pred$) incrementally,
  and obtain $t_i$ from
  the above discussion.  The value of $t_i$ is stored in $A[i]$.  Note
  that this step requires the access to the original $B$.
\item Scan $A[1..r]$ (i.e., $t_i$) from right to left, and obtain
  $\max\ t_i^{-1}$ decrementally (stored in $A[i]$) by
  Corollary~\ref{cor:t}. 
\item Scan $A[1..z]$ (i.e., $\max\ t_i^{-1}$)
  incrementally ($i=1,2,\ldots,z$), and obtain the
  desired $A[i]$ by the second equality in Lemma~\ref{lem:b}.
\end{enumerate}

\begin{lemma}
\label{lem:sls}

Given the integer array $A$ and $B$, each of size $n$, 
where each $B[i]$ saves the ending
position of $\lsus_i^k$, if $\lsus_i^k$ exists and {\tt NIL}
otherwise, we can work in the place of
array $A$ and $B$, using $O(n)$ time, such that, 
in the end of computation,
array $B$ does not change, and $A[i]$ saves $j$, where
$\lsus_j^k$ is the rightmost $\sls_i^k$, 
if such $j$ exists; otherwise, $A[i]={\tt NIL}$.
That is, 
$\sls_i^k = S\bigl[A[i].. B[A[i]]\bigr]$, if $\sls_i^k$ exists;
otherwise, $A[i] = B[i] = {\tt NIL}$.
\end{lemma}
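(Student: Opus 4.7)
The plan is to follow the four-stage construction sketched right above the lemma statement, but to be explicit about why each stage (a)~is correct, (b)~runs in $O(n)$ total time, and (c)~can be executed using only the space of $A$ together with the read-only $B$, so that $B$ is untouched at the end.

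First, I would locate the pivotal indices $r$ and $z$ of Definition~\ref{def:rz} by a single right-to-left scan of $B$: the first non-NIL cell encountered is $B[r]$ and its value is $z$. For every $i>z$ I set $A[i]\leftarrow \mathtt{NIL}$ once and for all; by Lemmas~\ref{lem:exist} and~\ref{lem:lsus2}, no $\lsus_j^k$ can cover such an $i$, so $A[i]=\mathtt{NIL}$ is the correct final value. The remainder of the work is confined to $A[1..r]$ (with $B[1..r]$ read-only), producing the correct $A[i]$ for $1\le i\le z$.

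Second, I would compute $\pred[i]$ for $i=1,\dots,r$ and store it in $A[i]$. Lengths $\ell_i=B[i]-i+1$ are read from $B$ in $O(1)$. The computation mimics KMP's failure-function construction: walking the chain $\pred[i-1],\pred[\pred[i-1]],\ldots$ until an index $j$ with $\ell_j<\ell_i$ appears. A standard amortized argument (each jump strictly decreases a potential bounded by $r$) bounds the total time by $O(n)$. Third, I scan $A$ left to right and replace each $\pred[i]$ in $A[i]$ by $t_i=\max\{B[\pred[i]]+1,\,i\}$ (or $t_i=i$ when $\pred[i]=\mathtt{NIL}$); the justification that this equals Definition of $t_i$ is the footnote in the excerpt, and each update is $O(1)$. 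Fourth, I invoke Corollary~\ref{cor:t} to overwrite $A[1..r]$ from right to left, leaving $\max t_i^{-1}$ in $A[i]$; the correctness and in-place nature are exactly the content of that corollary. Finally, by the second equality of Lemma~\ref{lem:b}, a left-to-right scan $A[i]\leftarrow\max\{A[i-1],A[i]\}$ (treating undefined entries as $-\infty$, and padding $A[0]=0$) converts each cell to the desired final value; this needs $O(n)$ time and uses only $A$.

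The main obstacle I expect is making sure that stages 3 and 4 really are in-place in the strict sense: at the moment we overwrite $A[i]$, any information previously stored there must no longer be needed. For stage 3 this is immediate because $\pred[i]$ is consumed only to produce $t_i$ via an $O(1)$ lookup in the read-only $B$. For stage 4 the subtle point is that when we write $\max t_j^{-1}$ into $A[j]$ we destroy $t_j$; but by processing $i=r,r-1,\ldots,1$, the value $t_j$ we need at step $i$ (namely with $j=t_i$) satisfies $j\ge i$, and cell $A[j]$ has already been overwritten by $\max t_j^{-1}$, not by a stale $t_j$ we still need — so we must read $t_i$ from $A[i]$ \emph{before} potentially writing to $A[t_i]$. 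Once this ordering invariant is pinned down, correctness of each of the four stages combines with the $O(n)$ bounds from the previous paragraphs to yield the lemma; and since $B$ is only read throughout, it remains unchanged, establishing the final claim $\sls_i^k=S[A[i]..B[A[i]]]$ whenever $\sls_i^k$ exists and $A[i]=B[i]=\mathtt{NIL}$ otherwise.
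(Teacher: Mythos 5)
Your proposal follows the paper's own route essentially verbatim: locate $r$ and $z$ by a right-to-left scan of $B$, compute $\pred$ in $A$ by the KMP-style chain argument, convert in place to $t_i$, then to $\max t_i^{-1}$ via the right-to-left overwriting scan of Corollary~\ref{cor:t}, and finish with the left-to-right prefix-maximum justified by the second equality of Lemma~\ref{lem:b}. The additional care you take with the overwrite-ordering invariant and the amortized bound for $\pred$ matches the paper's reasoning, so the proof is correct and not a genuinely different approach.
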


\section{Finding $k$-mismatch SUS}
\label{sec:sus}
Now we have array $A$, where $A[i]=j$, such that $\lsus_j^k$ is the
rightmost $\sls_i^k$, if position $i$ is covered by any
$k$-mismatch LSUS; otherwise, $A[i] = {\tt NIL}$. Note that $A[i] = j$
is recording the start position of the rightmost $\sls_i^k$ already,
because $\lsus_j^k$ starts on position $j$.  We also have array $B$,
where $B[i] = i+|\lsus_i^k|-1$, the ending position of $\lsus_i^k$, if
$\lsus_i^k$ exists; otherwise, $B[i]={\tt NIL}$.

\remove{
Recall that it is possible that some positions may not be covered by
any $k$-mismatch LSUS (see the example after
Definition~\ref{def:lsus}). Due to Lemma~\ref{lem:exist}
and~\ref{lem:lsus2}, we know such positions must comprise a continuous
chunk on the right end of string $S$. Let $z$, $1\leq z\leq n$, denote
the rightmost position that is covered by at least one $k$-mismatch
LSUS, i.e., $A[z]$ is the rightmost non-NIL element in array
$A$. Again, due to Lemma~\ref{lem:exist} and~\ref{lem:lsus2}, it is
trivial to find the value of $z$ in $O(n)$ time: scan
array $B$ from right toward left, and stop when meeting the first
non-NIL $B$ array element, say $B[\delta]$, then $z = \delta +
B[\delta] -1$.
}

\paragraph{Step I.} We want to transform $A$ and $B$, such that each
$(A[i],B[i])$ pair saves the start and ending positions of $\sls_i^k$,
if $\sls_i^k$ exists; otherwise, we set $(A[i],B[i]) = ({\tt
  NIL,NIL})$.  Since each $A[i]$ is already recording the start
position of $\sls_i^k$ already, as we have explained at the beginning
of this section, we only need to make changes to array $B$.  We first
set $B[i]={\tt NIL}$ for all $i>z$ (Definition~\ref{def:rz}). Then, we
scan array $B$ from right to left, starting from position $z$ through
$1$, and set each $B[i] = B[A[i]]$, the ending position of the
rightmost $\sls_i^k$. Because the leftmost position that any existing
$\lsus_i^k$ can cover is position $i$, we know $A[i]\leq i$ and we no
longer need $B[i]$ (i.e., the information of $\lsus_i^k$) after
$\sls_i$ is computed. Therefore, it is safe to record $\sls_i^k$ by
overwriting $B[i]$ by $B[A[i]]$ (i.e., the ending position of
$\sls_i^k$), in this right-to-left scan.

\paragraph{Step II.} 
We use arrays $A$ and $B$ to calculate $\sus_i^k$ for each $i$
and save the result in the place of $A$ and $B$, i.e., each
$(A[i],B[i])$ pair saves the start and ending position of $\sus_i^k$.
Because of Lemma~\ref{lem:ext} and \ref{lem:ext2}, we can use arrays
$A$ and $B$ to compute each $\sus_i^k$ inductively, as follows: 
\begin{enumerate}
\item 
$\sus_1^k = \lsus_1^k = \sls_1^k = S\bigl[A[1].. B[1]\bigr]$. 
\item 
For $i=2,3,\ldots,n$, we compute $\sus_i^k$: 
\begin{enumerate}
\item 
If $(A[i],B[i]) = ({\tt NIL}, {\tt NIL})$, meaning $\sls_i^k$
does not exist, we set $\sus_i^k$ to be $\sus_{i-1}^k$ appended by the
character $S[i]$, i.e., $\sus_i^k = S\bigl[A[i-1]..
B[i-1]+1\bigr]$, and save 
$\sus_i^k$
by setting $(A[i],B[i]) = (A[i-1],B[i-1]+1)$; 
\item Else, if $\sus_{i-1}^k$ ends at position $i-1$ and 
$\sus_{i-1}^kS[i]=S\bigl[A[i-1].. B[i-1]+1\bigr]$ is shorter 
than $\sls_i^k = S\bigl[A[i].. B[i]\bigr]$, 
we set $(A[i],B[i]) = (A[i-1],B[i-1]+1)$; 
\item Else, $\sus_i^k = \sls_i^k$ and thus we leave $A[i]$
  and $B[i]$ unchanged.
\end{enumerate}
\end{enumerate}

\begin{lemma}
\label{lem:sus}  
Given arrays $A$ and $B$:
\begin{itemize}
\item 
$A[i]=j$, such that $\lsus_j^k$ is the
rightmost $\sls_i^k$, if $\sls_i^k$ exists; 
otherwise, $A[i] = {\tt NIL}$; 
\item $B[i] = i+|\lsus_i^k|-1$, the ending position of $\lsus_i^k$, if
$\lsus_i^k$ exists; otherwise, $B[i]={\tt NIL}$. 
\end{itemize}
we can work in the place of $A$ and $B$, using $O(n)$ time, 
such that, in the end of
computation, each $(A[i],B[i])$ saves the start and ending
positions of $\sus_i^k$, i.e., 
$\sus_i^k = S\bigl[A[i].. B[i]\bigr]$, $i=1,2,\ldots,n$.  
\end{lemma}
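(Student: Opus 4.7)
The plan is to verify correctness of the two-step in-place procedure described just above the lemma, matching each step to the invariant it installs on $(A, B)$ and confirming the time bound along the way. First I will handle Step~I, which rewrites $B$ so that $(A[i], B[i])$ records the endpoints of the rightmost $\sls_i^k$. The only write is the assignment $B[i] \gets B[A[i]]$ executed right-to-left for $i = z, z-1, \ldots, 1$. The key observation is that $A[i] \leq i$ whenever $A[i] \neq \texttt{NIL}$, because $\lsus_{A[i]}^k$ starts at $A[i]$ and must cover position $i$; consequently $B[A[i]]$ lies at an index $\leq i$ and has not yet been overwritten in the right-to-left pass, so the substitution is safe in-place. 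Combined with the \texttt{NIL}-initialization at indices $i > z$, this establishes $(A[i], B[i]) = (\text{start}, \text{end})$ of the rightmost $\sls_i^k$, or $(\texttt{NIL}, \texttt{NIL})$ if no such $\sls_i^k$ exists.

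Next I will handle Step~II by induction on $i$, using Lemma~\ref{lem:ext} and Lemma~\ref{lem:ext2} as the structural tools. The base case $\sus_1^k = \sls_1^k = S[A[1]..B[1]]$ is immediate since $\lsus_1^k$ always exists. For the inductive step, Lemma~\ref{lem:ext} tells me that $\sus_i^k$ is either $\sls_i^k$ itself or a right extension $S[j..i]$ of some $\lsus_j^k$; Lemma~\ref{lem:ext2} then forces the extension, when it is $\sus_i^k$, to be exactly $\sus_{i-1}^k \cdot S[i]$, with $\sus_{i-1}^k$ ending at position $i-1$. So at position $i$ there are only two candidates for $\sus_i^k$: the rightmost $\sls_i^k$ (recorded in $(A[i], B[i])$ after Step~I) and the extension $S[A[i-1]..B[i-1]+1]$ (computable from the already-finalized entry $i-1$). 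The three-case rule in the algorithm exhaustively distinguishes (a) no $\sls_i^k$, where Lemma~\ref{lem:ext} forces the extension; (b) the extension is strictly shorter; and (c) otherwise.

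The point I expect to be the main obstacle is the rightmost tie-breaking rule in case~(c). I will dispatch it by a brief start-position argument in the equal-length case: if $|\sls_i^k|$ equals the length of the extension and both cover $i$, then $B[i] \geq i$, and equating lengths gives $A[i] = B[i] - i + A[i-1] \geq A[i-1]$; hence $\sls_i^k$ sits at least as far right as the extension, so preferring it on ties agrees with the rightmost SUS convention. In-place safety for Step~II is immediate because the left-to-right pass reads only $(A[i-1], B[i-1])$ when overwriting $(A[i], B[i])$, and by induction those reads see the already-finalized $\sus_{i-1}^k$ information. Each of Steps~I and~II is a single linear scan over at most $n$ entries, giving $O(n)$ total time as required.
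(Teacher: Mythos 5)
Your proposal is correct and follows essentially the same route as the paper: it verifies the two-step in-place procedure (the right-to-left rewrite $B[i]\leftarrow B[A[i]]$ justified by $A[i]\leq i$, then the left-to-right induction driven by Lemma~\ref{lem:ext} and Lemma~\ref{lem:ext2}). Your explicit check that preferring $\sls_i^k$ on length ties yields the rightmost $\sus_i^k$ (via $A[i]=B[i]-i+A[i-1]\geq A[i-1]$) is a small detail the paper leaves implicit, but the overall argument is the same.
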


\noindent
By concatenating the claims in Lemmas~\ref{lem:lsus-exact},
\ref{lem:lsus-k}, \ref{lem:sls}, and \ref{lem:sus}, we get the
final result.

\begin{theorem} 
\label{thm:sus} 
Given the array $S$ of size $n$ that
  saves the input string, two empty integer arrays $A$ and $B$, each
  of size $n$, and the value of integer $k\geq 0$, we can work in the
  place of arrays $S$, $A$, and $B$, using a total of $O(n)$ time for
  $k=0$ and $O(n^2)$ time for any $k\geq 1$, such that in the end of
  computation, $S$ does not change, each $(A[i], B[i])$ pair
  represents the start and ending positions of the rightmost
  $\sus_i^k$, i.e., $\sus_i^k = S\bigl[A[i].. B[i]\bigr]$. 
\end{theorem}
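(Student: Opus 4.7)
The plan is to prove Theorem~\ref{thm:sus} by straightforward composition of the three-stage pipeline laid out in Section~\ref{sec:highlevel}, appealing to the four lemmas already established. Since each lemma is stated precisely in the ``in-place'' form required by the theorem (the arrays $S$, $A$, $B$ are reused with specified pre- and post-conditions), the main task is to verify that the post-condition of each stage exactly matches the pre-condition of the next, and to sum the running times correctly under the case split on $k$.

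First I would handle Stage~1. For $k = 0$, I invoke Lemma~\ref{lem:lsus-exact}, which in $O(n)$ time leaves $S$ unchanged and stores in $B[i]$ the ending position of $\lsus_i^0$ (or $\mathtt{NIL}$ if it does not exist), using only the workspace $A$ and $B$. For $k \geq 1$, I instead invoke Lemma~\ref{lem:lsus-k}, which achieves the analogous post-condition on $B$ in $O(n^2)$ time. In both cases the output is an array $B$ satisfying exactly the hypothesis required by Lemma~\ref{lem:sls}.

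Next I would run Stage~2 by applying Lemma~\ref{lem:sls}: in $O(n)$ time, working in place of $A$ and $B$ and leaving $B$ unchanged, each $A[i]$ is overwritten with the starting position $j$ of the rightmost $\sls_i^k$ (or $\mathtt{NIL}$), so that $\sls_i^k = S\bigl[A[i].. B[A[i]]\bigr]$ whenever it exists. This is precisely the pair of invariants on $A$ and $B$ that Lemma~\ref{lem:sus} requires as input. Then I would apply Lemma~\ref{lem:sus} to carry out Stage~3, which in another $O(n)$ time transforms $(A,B)$ so that each $(A[i],B[i])$ encodes the start and end of the rightmost $\sus_i^k$, giving $\sus_i^k = S\bigl[A[i].. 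B[i]\bigr]$ as required. Throughout, $S$ is only read, so it is unchanged at termination.

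Finally I would tally the time: Stage~1 dominates and contributes $O(n)$ when $k = 0$ or $O(n^2)$ when $k \geq 1$, while Stages~2 and~3 each contribute $O(n)$, giving the totals claimed. Space-wise, no auxiliary storage beyond $S$, $A$, $B$ is ever used, since every lemma in the chain is already in-place. There is essentially no obstacle here beyond bookkeeping: the substantive work has been done in the earlier lemmas, and the proof of the theorem is purely the verification that the interface between consecutive stages lines up and that the time bounds combine as stated.
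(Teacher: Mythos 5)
Your proposal is correct and matches the paper's own argument, which derives Theorem~\ref{thm:sus} precisely by concatenating the claims of Lemmas~\ref{lem:lsus-exact}, \ref{lem:lsus-k}, \ref{lem:sls}, and~\ref{lem:sus}; your version simply spells out the interface-matching and time-accounting that the paper leaves implicit.
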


\remove{

\section{Extension: finding all SUSes for every position}
The current solution for both exact and mismatched SUS finding only
find one candidate for each $\sus_i$. Can we find all candidates for
each $\sus_i$ ? Of course, doing so will need more than $2n$ words.

\section{Extension: finding  SUS covering chunk of position with k-mm}

The exact SUS version covering chunk of positions 
has been studied by HU et al in SPIRE 2014. The
solution to the
mm version can be the following. 

Compute all the existing LSUS's and RSUS's (right-bounded shortest
unique substring). We have no more than $2n$ intervals here. Delete
all duplicated intervals if there are. Given any chunk of positions,
its SUS is the shortest interval (or the extension of some interval)
that cover the chunk of positions. The exact-SUS problem can eventually be reduced to
range-minimum-query problem. I somehow believe this is also possible
for the mm version. It shouldn't take much time to think through it if we
want to add this extension, which of course won't be in the $2n$ words $+$ $n$ Bytes
framework though. 

}

\section{Conclusion}
In this paper, we revisited the exact SUS finding problem, and
proposed its approximate version where mismatches are allowed, and
thus significantly extended the usage of SUS finding in subfields such
as computational biology. We designed a generic in-place algorithmic
framework that uses the minimum $2n$ words plus $n$ bytes space and
can fit to find both exact and approximate $k$-mismatch SUS, with
$O(n)$ and $O(n^2)$ time complexities, respectively, regardless of the
value of any $k\geq 1$. 
An urgent future work will be researching for a faster (and still
practical) in-place algorithm for finding approximate LSUS to replace
the current algorithm discussed in Section~\ref{sec:lsus-k}.  Such new
algorithm will lead to an overall faster in-place solution for
approximate SUS finding.

\small
\bibliographystyle{splncs03}
\bibliography{bibjsv,repeat,pm}

\newpage

\appendix

\section*{Appendix}


\begin{algorithm}[h!]
 \caption{Finding exact LSUS}
\label{algo:lsus-exact}
\KwIn{String $S$ and integer arrays $A$ and $B$, each of size $n$.} 
\KwOut{$S$ does not change. $B[i]$ = ending position of
  $\lsus_i^0$, if $\lsus_i^0$ exists; otherwise, $B[i]={\tt NIL}$.}

\bigskip 

Create the $\sa$ of $S$ using the suffix array construction algorithm
from~\cite{Nong-TIS2013}, where array $A$ is used to save the
resulting $\sa$ and $B$ is used as the workspace for the run of the
algorithm. 

\medskip 

\tcc{Create the $\rank$ of $S$ and save the result in the array $B$.}
\lFor{$i=1\ldots n$}{$\rank[\sa[i]] \leftarrow i$}
\tcp*{i.e., $\sa = A$, $\rank=B$, and $B[A[i]]\leftarrow i$.}

\medskip

\tcc{From here on,  $A$ and $\sa$ are the same physical array. 
$B$, $\rank$, and $\lsus$ are the same physical array.}

\medskip 

$x \leftarrow 0$;
$y \leftarrow 0$\;

\For{$i = 1, 2, \ldots, n$}{
  \If{$\rank[i] > 1$}{
    $j\leftarrow \sa[\rank[i]-1]$\;
    \tcc{Calculate the length of the $0$-mismatch LCP
        between  $\seq[i.. n]$ and its lexicographically preceding suffix.}
    \lWhile{$\seq[i+x] = \seq[j+x]$}
    {$x \leftarrow x + 1$}
  }
  \lElse{
    $x\leftarrow 0$\;
  }
  \If{$\rank[i] < n$}{
    $j\leftarrow \sa[\rank[i]+1]$\;
    \tcc{Calculate the length of the $0$-mismatch LCP
        between  $\seq[i.. n]$ and its lexicographically succeeding suffix.}
   \lWhile{$\seq[i+y] = \seq[j+y]$}{$y \leftarrow y + 1$\;}
  }
  \lElse{
    $y\leftarrow 0$\;
  }
  \lIf
   {$i+\max\{x,y\}\leq n$}{
    $\lsus[i] \leftarrow i + \max\{x, y\} $\tcp*{ending position of $|\lsus_i|$}
  }
  \Else(\tcp*[f]{$\lsus_i$ does not exist. Early stop.}){
    \lFor{$j=i\ldots n$}{
      $\lsus[j] \leftarrow \nil$\;
    }
    Break\;
  }

  \lIf{$x > 0$}{$x \leftarrow x - 1$\;}
  \lIf{$y > 0$}{$y \leftarrow y - 1$\;}
}
\end{algorithm}


\begin{algorithm}[h!]
 \caption{Finding approximate LSUS}
\label{algo:lsus-approx}
\KwIn{String $S$ and integer arrays $A$ and $B$, each of size $n$, the
  value of $k\geq 1$.} 
\KwOut{$S$ does not change. $B[i]$ = ending position of
  $\lsus_i^k$, if $\lsus_i^0$ exists; otherwise, $B[i]={\tt NIL}$.}

\bigskip 

\lFor{$i=1\ldots n$}{
  $B[i] \leftarrow 0$ \tcp*{Initialization}
}

\medskip

\tcc{We use $A[1\dots k+1]$ as a circular array to save the $k+1$ most recently
found mismatched positions.}

\medskip 

$capacity \leftarrow k+1$\tcp*{The capacity of the circular array that 
  records at most $k+1$ mismatched positions.}

\medskip

$cursor \leftarrow 1$\tcp*{The index of the circular array position that is
  saving the most recently founded mismatched position. It can be
  initialized to be any value from $\{1,2,\ldots, capacity\}$.}

\medskip

\For(\tcp*[f]{$n-1$ phases}){$\delta = 1 \ldots n-1$}{
  $size\leftarrow 0$ \tcp*{The number of recorded mismatched positions
    in the circular array in the current phase.}
    \medskip
  \For{$i=n$ down to $\delta+1$}{
    \medskip
    \tcc{Comparing suffixes $S_i$ and $S_{i-\delta}$ by comparing
      their leading characters, as their remaining characters 
      have been
      compared in previous steps of this phase.}
    \medskip
   \If{$S[i] \neq S[i-\delta]$}{
      $cursor \leftarrow \bigl((cursor - 2 + capacity) \mod capacity\bigr) +
      1$\tcp*{We use 1-based indexing.}
      $A[cursor] \leftarrow i$\;
      $size \leftarrow \min(size + 1, capacity)$; 
    }
    \If{$size < capacity$}{
      $B[i] \leftarrow \max(B[i], n -i+1)$   \tcp*{$=size - i + 1$}
      $B[i-\delta] \leftarrow \max(B[i-\delta], n -i+1)$\;
    }
    \Else{
      $B[i] \leftarrow \max\Bigl(B[i], A\bigl[\bigl((cursor-1+k)\mod capacity\bigr)+1\bigr] -
      i\Bigr)$\tcp*{We use 1-based indexing.}
      $B[i-\delta] \leftarrow \max\Bigl(B[i-\delta], A\bigl[\bigl((cursor-1+k)\mod capacity\bigr)+1\bigr] -
      i\Bigr)$;
    }
  }
}

\For{$i = 1 \ldots n$}{
  \lIf{$B[i] = size - i +1$}{
    $B[i] \leftarrow \nil$\tcp*{$\lsus_i^k$ does not exist.}
  }
  \lElse{
    $B[i] \leftarrow i + B[i]$\tcp*{The ending position of $\lsus_i^k$.}
  }
}
\end{algorithm}


\begin{algorithm}[h!]
 \caption{Finding SLS (exact or approximate)}
\label{algo:sls}
\KwIn{Integer arrays $A$ and $B$, each of size $n$. Each $B[i]$ saves
  the ending position of $\lsus_i^k$, if $\lsus_i^k$ exists;
  {\tt NIL}, otherwise.} 
\KwOut{Array $B$ does not change. Each $A[i] = j$, such that
  $\lsus_j^k$ is the rightmost $\sls_i^k$, if $\sls_i^k$ exists;
 otherwise, $A[i]=\mathtt{NIL}$.}

\bigskip 

\tcc{Find the index of the rightmost existing $k$-mismatch LSUS.}
\lFor{$r=n$ down to $1$}{
  \lIf{$B[r] \neq \nil$}{
    break\;
  }
}

\medskip 

\tcc{Compute the $pred$ array, using the memory space of $A$
  array.  $pred[i]$ is the largest $j$, such that $j<i$ and $|\lsus_j^k|
  < |LSUS_i^k|$, if such $j$ exists; otherwise $pred[i] = \nil$.  If
  $\lsus_i^k$ does not exist, $pred[i] = \nil$ also.  
   From here on, $pred$ and $A$ are the same physical array.}

\medskip 

\lIf{$r<n$}{
  \lFor{$i=r+1\ldots n$}{
    $pred[i] \leftarrow \nil$ \tcp*{Positions that do not have
      $k$-mismatch LSUS}
  }
}

\medskip

$pred[1] \leftarrow \nil$\; 

\For{$i = 2 \ldots r$}{
  $\ell_i \leftarrow B[i] - i + 1$ \tcp*{$|\lsus_i^k|$}
  
  $j \leftarrow i-1$\;

 \lWhile{$pred[j] \neq \nil$ and  $B[j]-j+1 >= \ell_i$}{
    $j \leftarrow pred[j]$;
  }

 \lIf{$B[j]-j+1 < \ell_i$} {
    $pred[i] \leftarrow j$;
  }
  \lElse{
    $pred[i] \leftarrow \nil$;
  }
}

\medskip 

\tcc{ Compute the $t$ array, using the memory space of $A$ array.
  $t[i]$ is the start position of the effective region of $\lsus_i^k$,
  if $\lsus_i^k$ exists; $\nil$, otherwise. From here on, $t$ and $A$
  are the same physical array.}

\medskip 

\For{$i = 1 \ldots r$}{
  \lIf{$pred[i] = \nil$}
  $t[i] \leftarrow i$;
  \lElse{
    $t[i] \leftarrow \max(B[pred[i]] + 1, i)$;}
}

\medskip

\tcc{ Compute the $t^{-1}$ array, using the memory space of array $A$.
$t^{-1}[i]$ is the largest $j$, such that the effective region
  of $\lsus_j^k$ starts on position $i$, if such $j$ exists;
  otherwise, $\nil$. From here on, $t^{-1}$ and $A$ are the same
  physical array.}
  
\medskip

\For{$i = r$ down to $1$}{
  \lIf{$t^{-1}[t[i]] = \nil$}{
    $t^{-1}[t[i]] \leftarrow i$\;
  }
  \lIf{$i < t[i]$}{
    $t^{-1}[i] \leftarrow \nil$\tcp*{Enable us to update
      this place 
      in the future when needed.}
  }
}

\medskip 

\tcc{ Compute $\sls$ array using the memory space of array $A$.
  $\sls[i] = j$, such that $\lsus_j^k$ is the rightmost 
$\sls_i^k$, if $\sls_i^k$ exists;
$\nil$, otherwise. From here on, $\sls$ and $A$ are the same
  physical array.}

\medskip

  $\sls[1] \leftarrow 1$\;

  \lFor{$i = 2 \ldots B[r]$}{
    $\sls[i] \leftarrow \max(\sls[i-1], t^{-1}[i])$;
  }

\end{algorithm}


\begin{algorithm}[h!]
 \caption{Finding  SUS (exact or approximate)}
\label{algo:sus}
\KwIn{Integer arrays $A$ and $B$, each of size $n$. (1) $A[i] = j$,
  such that $\lsus_j^k$ is the rightmost $\sls_i^k$,
 if $\sls_i^k$ exists; otherwise, $A[i]={\tt NIL}$. (2) $B[i]$ is the ending
  position of $\lsus_i^k$, if $\lsus_i^k$ exists; otherwise, $B[i] =
  {\tt NIL}$.} 
\KwOut{Each $(A[i],B[i])$ pair represents the start and ending
  positions of $\sus_i^k$. }

\bigskip 

\For{$i = n$ down to $1$}{
  \If{$B[i] \neq \nil$}{
    $z\leftarrow i+B[i]-1$\tcp*{The rightmost position covered
    by at least one $k$-mismatch LSUS.}
    break;
  }
}

\If{$z<n$}{
  \For{$i=z+1 \ldots n$}{
    $B[i]\leftarrow \nil$ \tcp*{Positions not covered by any
      $k$-mismatch LSUS.}
  }
}

\For{$i = z$ down to $1$}{
  $B[i] \leftarrow B[A[i]]$ \tcp*{The ending position of $\sls_i^k$.}
}

\medskip 

\tcc{By this point, $S\bigl[A[i].. B[i]\bigr] = \sls_i^k$, if
  $\sls_i^k$ exists; otherwise $A[i]=B[i] = \nil$. \newline 
 Note that $\sus_1^k = \sls_1^k = S\bigl[A[1].. B[1]\bigr]$, which must
  be existing and has been computed.\newline 
   Next, we compute $\sus_i^k$ for all $i\geq 2$. }

\medskip

\For{$i=2\ldots n$}{
 \If{$A[i]=B[i]=\nil$}{
    $A[i]\leftarrow A[i-1]$; 
    $B[i] \leftarrow B[i-1]+1$ \tcp*{$\sus_i^k = \sus_{i-1}^kS[i]$}
  }
  \ElseIf{$B[i-1] = i-1$ and $B[i-1]-A[i-1]+2 < B[i]-A[i]+1$}{
    $A[i]\leftarrow A[i-1]$; 
    $B[i] \leftarrow B[i-1]+1$ \tcp*{$\sus_i^k = \sus_{i-1}^kS[i]$}
  }
  
  \tcc{Otherwise, do nothing. $\sus_i^k = \sls_i^k$.}
}
\end{algorithm}


\end{document}